\documentclass[conference,10pt]{IEEEtran}
\usepackage[T1]{fontenc}
\usepackage{amsmath,amssymb,amsthm}
\usepackage{cite}
\usepackage[hidelinks]{hyperref}
\usepackage{xcolor}
\usepackage{comment}
\hypersetup{pdftitle={A pi/4 Lower Bound for Random Access in Symmetric Linear Ensembles}}

\newtheorem{theorem}{Theorem}
\newtheorem{lemma}[theorem]{Lemma}
\newtheorem{proposition}[theorem]{Proposition}
\newtheorem{corollary}[theorem]{Corollary}
\newtheorem{claim}[theorem]{Claim}
\theoremstyle{definition}

\DeclareMathOperator{\rank}{rank}
\newcommand{\spn}[1]{\langle #1\rangle}
\DeclareMathOperator*{\argmax}{arg\,max}
\DeclareMathOperator{\Pois}{Poisson}
\newcommand{\E}{\mathbb{E}}
\newcommand{\Prob}{\mathbb{P}}
\newcommand{\F}{\mathbb{F}}
\newcommand{\bfv}{\mathbf{v}}
\newcommand{\bfe}{\mathbf{e}}
\newcommand{\cA}{\mathcal{A}}
\newcommand{\cU}{\mathcal{U}}

\title{A $\pi/4$ lower bound for random access in symmetric linear ensembles under optimal decoding}
\author{\IEEEauthorblockN{Daniella Bar-Lev, Ohad Elishco, and Ryan Gabrys}}

\begin{document}
\maketitle

\begin{abstract}
We study the expected number of random linear observations required to recover a prescribed information symbol. For ensembles with uniformly random supports, independently sampled nonzero coefficients, and uniformly bounded column weights, we prove that the asymptotic normalized expected recovery time is at least $\pi/4$ under optimal decoding. The result holds over finite or infinite fields and extends the lower bound previously established for binary fully symmetric codes under peeling decoding. For singleton–pair observations, we determine the exact limiting expectation and show that optimal and peeling decoding have the same limit, independently of the field and coefficient distribution. The minimum within this family is approximately $0.793933445$, which explicit constructions attain asymptotically over every fixed finite field.
\end{abstract}

\begin{IEEEkeywords}
Random access, DNA-based data storage, random linear codes, maximum matchings.
\end{IEEEkeywords}

\section{Introduction}
The random-access problem in DNA-based storage asks how coding can reduce the number of random reads needed to recover a prescribed information symbol~\cite{cover}. In the noiseless uniform model, $k$ information symbols are encoded into $n$ symbols, which are sampled independently and uniformly with replacement. The objective is to minimize the expected number of samples needed for recovery, typically taking the maximum over the requested information symbols.

Related work considers several retrieval goals, including reconstructing the entire message~\cite{duality,optimal}, recovering requested files~\cite{files} or sets of information symbols~\cite{generalized}, characterizing tradeoffs between individual file retrieval times~\cite{block,hyperbolic}, and recovering a fraction of the message using LT codes~\cite{partial}. Other studies address errors and nonuniform sampling~\cite{hanna,cao,noisy}, repair of failed storage containers~\cite{repair}, and parallel sampling from multiple containers~\cite{distributed}. Further coverage models consider draws that reveal subsets or windows~\cite{windows}, and repeated observations of composite symbols or motif combinations~\cite{preuss,composite,motifs}.

Here, we focus on recovering a single information symbol under noiseless uniform sampling. For a linear encoding, the requested symbol $x_i$ is recoverable when the standard basis vector $\mathbf{e}_i$ belongs to the span of the sampled generator-matrix columns. Uncoded storage requires $k$ samples in expectation, whereas suitable encodings achieve smaller expectations~\cite{cover,combinatorial}. Work on this problem includes combinatorial expressions for recovery expectations~\cite{combinatorial}, geometric constructions~\cite{geometry}, asymptotic and distributional analyses~\cite{variables}, and algorithms for evaluating expectations together with improved constructions and bounds~\cite{wang}. Recent work also establishes identities relating coordinate recovery times in a code and its dual~\cite{matroid}.

Particularly relevant are the constructions of Boruchovsky et al.~\cite{first}, which combine repeated systematic columns with columns supported on pairs of information coordinates. Their analysis uses a graph whose edges represent these pair observations. Besides solving the dimension-two problem, they obtained an asymptotic normalized upper bound of $\pi^2/12\approx0.8225$ using constructions over sufficiently large finite fields.

Hofmeister, Bitar, and Yaakobi~\cite{hby} studied binary fully symmetric generator matrices, in which all columns of a given weight occur equally often. They connected sampling from these matrices to LT encoding and proved that, for bounded column weights, the asymptotic expected recovery time divided by $k$ is at least $\pi/4$ under peeling decoding. Their optimized degree distributions achieve approximately $0.7869$, while restricting the weights to one and two gives approximately $0.7939$. An optimal decoder can exploit linear dependencies that peeling does not resolve, so this lower bound does not immediately apply under optimal decoding.

In this work, we prove that the lower bound of $\pi/4$ holds under optimal decoding for ensembles with uniformly random supports and independently sampled nonzero coefficients over finite or infinite fields. The degree distribution may vary with $k$, provided that the column weights have a common finite bound. The proof relates coordinate recovery to rank increments and combines the sparse random matrix rank formula of Coja-Oghlan et al.~\cite{rank} with an integral inequality. For singleton--pair observations, we determine the exact limiting expectation and show that optimal and peeling decoding have the same limit. We also give explicit constructions over every fixed finite field attaining the optimized limit of approximately $0.793933445$, and positive-rate block constructions that approximate it arbitrarily closely.

\section{Notation and Problem Formulation}\label{sec:model}
For a positive integer $k$, let $[k]=\{1,\ldots,k\}$. Throughout, $\F$ is a field equipped with a $\sigma$-algebra that makes it a standard Borel space and with respect to which addition and multiplication are measurable, $\F^{*}=\F\setminus\{0\}$, $\bfe_i\in\F^k$ denotes the $i$-th standard basis vector, and all logarithms are natural. The support of a vector $\bfv\in\F^k$ is $\mathrm{supp}(\bfv)=\{i\in[k]:v_i\neq0\}$, and its weight is $|\mathrm{supp}(\bfv)|$. For vectors $\bfv_1,\ldots,\bfv_n$, we write $\langle\bfv_1,\ldots,\bfv_n\rangle$ for their span, and for a matrix $M$, $\langle M\rangle$ denotes its column span.

The observations are generated by independent random columns $\bfv_1,\bfv_2,\ldots\in\F^k$. Their distribution is determined by a probability distribution $P_k$ on $[k]$, called the \emph{degree distribution}, and a probability distribution $\mu$ on $\F^{*}$, called the \emph{coefficient distribution}. To generate $\bfv_j$, a weight $w$ is drawn according to $P_k$, a support $S\subseteq[k]$ is chosen uniformly at random among all subsets of size $w$, and the entries of $\bfv_j$ indexed by $S$ are drawn independently according to $\mu$. All other entries are zero. For $|\F|>2$, the support alone does not determine the coefficients, as distinct columns of a fixed generator matrix may share a support.

For $\F=\mathbb{F}_2$, every nonzero entry equals $1$, so $\bfv_j$ is determined by its support. This case describes sampling from a fixed binary generator matrix. Following~\cite{hby}, a  binary $k\times n$ matrix $G$ is called \emph{fully symmetric} if, for every $w\in[k]$, all vectors of weight $w$ appear as columns of $G$ equally often. As observed in~\cite{hby}, a uniformly sampled column of such a matrix has weight $w$ with probability equal to the fraction of columns of $G$ of weight $w$, and, given its weight, it is uniform among all vectors of that weight. Hence, sampling the columns of $G$ uniformly with replacement is the case $\F=\mathbb{F}_2$ of our model, with $P_k$ equal to the column-weight distribution of $G$.

Let $\mathbf{x}\in\F^k$ denote the information vector. The $j$-th observation is $\mathbf{x}^{\mathsf T}\bfv_j$, so after $n$ observations the decoder knows $\mathbf{x}^{\mathsf T}M_k(n)$, where $M_k(n)=[\bfv_1\ \cdots\ \bfv_n]$ is the $k\times n$ matrix formed by the first $n$ columns, and $\spn{M_k(0)}=\{\mathbf{0}\}$. The symbol $x_i$ is determined by $\mathbf{x}^{\mathsf T}M_k(n)$ for every $\mathbf{x}\in\F^k$ if and only if $\bfe_i\in\spn{M_k(n)}$. Indeed, if $\bfe_i=M_k(n)\cdot \mathbf{a}$, then $x_i=\mathbf{x}^{\mathsf T}\cdot M_k(n)\cdot \mathbf{a}$. Otherwise, there exists $\mathbf{z}\in\F^k$ with $z_i\neq0$ and
$\mathbf{z}^{\mathsf T}\cdot M_k(n)=\mathbf{0}$. Then $\mathbf{x}$ and $\mathbf{x}+\mathbf{z}$ give the same observations, while $(\mathbf{x}+\mathbf{z})_i=x_i+z_i\neq x_i$. Accordingly, the \emph{recovery time} of the $i$-th information symbol is
\begin{equation}\label{eq:tau}
 \tau_i=\min\{n\geq 0:\bfe_i\in\spn{M_k(n)}\},
\end{equation}
with $\min\emptyset=\infty$. This is the number of observations required by an optimal decoder, without any restriction on the decoding procedure.

Since the distribution of each column is invariant under permutations of the coordinates, the distribution of $\tau_i$ does not depend on $i$. We study the \emph{normalized expected recovery time} $\E[\tau_i]/k$. For $\alpha\geq0$, let
\[
 U_k(\alpha)=\Prob\bigl\{\bfe_i\notin\spn{M_k(\lfloor\alpha k\rfloor)}\bigr\}
\]
be the probability that the $i$-th symbol is not recovered after $\lfloor\alpha k\rfloor$ observations. Its dependence on $P_k$ and $\mu$ is suppressed from the notation. Since $\tau_i$ is integer valued,
$\Prob\{\tau_i>\alpha k\}=\Prob\{\tau_i>\lfloor\alpha k\rfloor\}=U_k(\alpha)$,
and therefore
\begin{equation}\label{eq:area}
 \frac{\E[\tau_i]}{k}=\int_0^\infty\Prob\{\tau_i>\alpha k\}\,d\alpha
 =\int_0^\infty U_k(\alpha)\,d\alpha.
\end{equation}

\begin{theorem}\label{thm:main}
Let $\F$ be a field, let $\mu$ be a probability distribution on $\F^{*}$, and let $W$ be a positive integer. For every sequence $(P_k)$ of degree distributions such that every column has weight at most $W$, that is, $P_k(w)=0$ for all $w>W$,
\begin{equation}\label{eq:main}
 \liminf_{k\to\infty}\frac{\E[\tau_i]}{k}\geq\frac{\pi}{4}.
\end{equation}
\end{theorem}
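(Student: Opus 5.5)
The plan is to bound $U_k(\alpha)$ from below by a rank deficiency, then use the asymptotic rank formula of Coja-Oghlan et al.~\cite{rank}, and finish with an integral inequality over all degree distributions supported on $[W]$.

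\emph{Reduction to ranks.} Fix $n$ and let $R=\{i:\bfe_i\in\spn{M_k(n)}\}$. The vectors $\bfe_i$, $i\in R$, are linearly independent and lie in $\spn{M_k(n)}$, so $|R|\le\rank M_k(n)$. The distribution of $\tau_i$ does not depend on $i$, so
\[
1-U_k(\alpha)=\frac{\E|R|}{k}\le\frac{\E[\rank M_k(\lfloor\alpha k\rfloor)]}{k},
\qquad\text{hence}\qquad
U_k(\alpha)\ge 1-\frac{\E[\rank M_k(\lfloor\alpha k\rfloor)]}{k}.
\]
By~\eqref{eq:area}, it therefore suffices to show that $\liminf_k\int_0^\infty\bigl(1-\E[\rank M_k(\lfloor\alpha k\rfloor)]/k\bigr)^+d\alpha\ge\pi/4$. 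This step holds for every field and every $\mu$, which is why neither appears in the bound.

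\emph{Passing to a limit.} Each $P_k$ lies in the compact simplex of distributions on $[W]$. Take a subsequence attaining the $\liminf$ in~\eqref{eq:main}, and pass to a further subsequence along which $P_k\to P$. Fatou's lemma lets us take the limit inside the integral pointwise in $\alpha$. For fixed $\alpha$, the matrix $M_k(\lfloor\alpha k\rfloor)$ has i.i.d. columns with weights from $P_k$ and uniform supports. Each row degree is then asymptotically $\Pois(\alpha d)$, where $d=\sum_w wP(w)$. I would invoke the sparse rank formula of~\cite{rank}, which holds over any field and for any nonzero coefficient law. It gives
\[
\frac{\rank M_k(\lfloor\alpha k\rfloor)}{k}\to\rho_P(\alpha)
\]
in probability, and hence in mean, since the ratio is bounded by $1$. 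Here $\rho_P(\alpha)$ is an explicit variational expression: a max over $x\in[0,1]$ of a function built from $D(x)=\sum_w P(w)x^w$ and $e^{-\alpha d x}$-type terms. One technical point is that~\cite{rank} is stated for a fixed distribution, while here $P_k\to P$. I would handle this by monotone coupling in the entries of $P$, or by checking that the proof there is uniform in the degree distribution. The trivial bound $\rho_P(\alpha)\le\min(\alpha,1)$ is also available.

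\emph{Integral inequality (main obstacle).} It remains to show $\int_0^\infty(1-\rho_P(\alpha))\,d\alpha\ge\pi/4$ for every $P$ on $[W]$. I would first use the stationarity condition of the variational formula to write $1-\rho_P(\alpha)$ through the dominant fixed point $x^*(\alpha)$. I would then change variables from $\alpha$ to $x^*$, so that the integral becomes a one-dimensional integral depending on $P$ only through $D$. Finally, I would bound it below pointwise by an integrand whose integral is $\pi/4$. I expect the $\pi/4$ to arise as $\int_0^1\sqrt{1-t^2}\,dt$, or equivalently as an arctangent integral. This mirrors the peeling computation of~\cite{hby}: the rank formula's fixed point coincides with the peeling fixed point whenever peeling is optimal, and is otherwise smaller in the direction that only helps the bound.

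The difficult part is showing that the maximization in $\rho_P$, which is where optimal decoding can beat peeling, never pushes the integral below the bound. If the direct variable change fails, I would first try to bound $1-\rho_P(\alpha)$ below by a convex-duality expression. Specifically, I would write $1-\rho_P(\alpha)\ge$ the functional at any fixed trial $x$, choose $x$ depending on $\alpha$, and optimize the resulting integral over $P$ by a Cauchy--Schwarz argument.
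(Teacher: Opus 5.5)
Your reduction $U_k(\alpha)\ge1-\E[\rank M_k(\lfloor\alpha k\rfloor)]/k$ is valid, but it discards exactly the information the theorem needs. The rank deficit measures the codimension of $\spn{M}$, not how many $\bfe_j$ lie in it. As a result, the inequality $\int_0^\infty(1-\rho_P(\alpha))\,d\alpha\ge\pi/4$ that you plan to prove is false, and no change of variables, trial point, or Cauchy--Schwarz argument applied to $1-\rho_P$ can rescue it. In the paper's notation, $1-\rho_P(\alpha)=\max_z\Phi_\alpha(z)$, and for $p_1<1$ every maximizer satisfies $1-z^*=e^{-\alpha P'(z^*)}$. Hence
\[
 1-\rho_P(\alpha)=(1-z^*)-\alpha H(z^*),\qquad H(z)=1-P(z)-(1-z)P'(z)\geq0.
\]
The fixed point $z^*$ is the right object, but $1-\rho_P$ is the value $\Phi_\alpha(z^*)$, not $1-z^*$, and the term $-\alpha H(z^*)$ is a genuine loss.

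The singleton--pair case $P(z)=sz+(1-s)z^2$ shows how large this loss is. Here $H(z)=(1-s)(1-z)^2$. Combinatorially, $k-\rank M$ is asymptotically the number of unmarked tree components of $\mathcal{G}$, whereas $kU_k$ counts their vertices. The two differ by the roughly $\alpha(1-s)u_s(\alpha)^2k$ edges inside these trees. At $s\approx0.155$ the paper gives $\int_0^\infty\cU_P(\alpha)\,d\alpha\approx0.7939$, only about $0.0085$ above $\pi/4$. A numerical estimate gives $\int_0^\infty\alpha(1-s)u_s(\alpha)^2\,d\alpha\approx0.16$, so your target integral is about $0.63$. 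For $P(z)=z^W$ with $W$ large it is worse: the trivial bound $\rho_P(\alpha)\le\min(\alpha,1)$ that you mention is nearly attained, and your integral approaches $1/2$.

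The missing idea is to bound the number of unrecovered coordinates by a rank \emph{increment} rather than by the rank itself. The paper appends $\Pois(\beta)$ copies of each $\bfe_j$, independently of $M$. This raises the rank by at most the number of hit coordinates $j$ with $\bfe_j\notin\spn{M}$, which gives Lemma~\ref{lem:finite}: $U_k(\alpha)\ge(R_k(\alpha,\beta)-R_k(\alpha,0))/(1-e^{-\beta})$. The paper then applies the rank formula to the augmented matrix. This needs $\beta>0$ for the divisibility condition in~\cite{rank}, and a comparison between your column-by-column model and the prescribed-degree model used there. Letting $\beta\downarrow0$ extracts an envelope derivative of $\max_z\Phi_{\alpha,\beta}(z)$, namely $e^{-\alpha P'(z^*)}=1-z^*$ at a limiting maximizer. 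This removes the $-\alpha H(z^*)$ loss and yields $\cU_P(\alpha)=1-\zeta(\alpha)$.

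Even then, the integral step is not a pointwise comparison, because the largest maximizer $\zeta$ can jump. The paper passes to its generalized inverse $\sigma$ and proves $\int_0^1-\log(1-z)/\sigma(z)\,dz=1$, using at each jump that both endpoints maximize $\Phi_{\alpha_0}$. Cauchy--Schwarz then gives $\int_0^1\sigma(z)\,dz\ge\bigl(\int_0^1\sqrt{-\log(1-z)}\,dz\bigr)^2=\Gamma(3/2)^2=\pi/4$. Your handling of $P_k\to P$ is fine in spirit; the paper does it with a total-variation coupling of the column weights.
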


For $\F=\mathbb{F}_2$, Theorem~\ref{thm:main} applies to every sequence of binary fully symmetric generator matrices whose columns have weight at most $W$. Since an optimal decoder recovers $x_i$ no later than any other decoder, the theorem strengthens the bound of~\cite{hby}, which was proved for the peeling decoder.

The proof has two parts. In Section~\ref{sec:rank}, we bound $U_k(\alpha)$ from below, as $k\to\infty$, by an explicit function obtained from the rank of a sparse random matrix. In Section~\ref{sec:integral}, we show that the integral of this function is at least $\pi/4$. Both parts are first carried out for a fixed degree distribution $P=(p_1,\ldots,p_W)$ on $[W]$, which we identify with its probability generating polynomial $P(z)$.
The extension to sequences $(P_k)$ is given at the end of Section~\ref{sec:integral}.

\section{A Lower Bound on the Probability of Non-Recovery}\label{sec:rank}
Throughout this section and Section~\ref{sec:integral}, except in the proof of Theorem~\ref{thm:main}, the degree distribution is fixed. That is, $P_k=P$ for all $k\geq W$, where $P=(p_1,\ldots,p_W)$ is a probability distribution on $[W]$. We identify $P$ with its probability generating polynomial
\[
 P(z)=\sum_{w=1}^{W}p_wz^w,\qquad z\in[0,1],
\]
and denote its derivative by $P'(z)=\sum_{w=1}^{W}wp_wz^{w-1}$. Note that $P(0)=0$ and $P(1)=1$.

Our lower bound on $U_k(\alpha)$ compares the rank of $M_k(\lfloor\alpha k\rfloor)$ with the rank of an augmented matrix. Fix $\alpha,\beta\geq0$, and let $N_1,\ldots,N_k$ be Poisson random variables with mean $\beta$, independent of each other and of $M_k(\lfloor\alpha k\rfloor)$. Let $S_k(\beta)$ be the $k\times(N_1+\cdots+N_k)$ matrix that contains $N_j$ copies of $\bfe_j$ for every $j\in[k]$. We refer to its columns as \emph{singleton columns}, and to
$\widetilde M_k(\alpha,\beta)=[\,M_k(\lfloor\alpha k\rfloor)\ \ S_k(\beta)\,]$ as the \emph{augmented matrix}. Define
\[
R_k(\alpha,\beta)=\frac1k\,\E\bigl[\rank\widetilde M_k(\alpha,\beta)\bigr].
\]
In particular, $R_k(\alpha,0)=\frac1k\E[\rank M_k(\lfloor\alpha k\rfloor)]$. Repeated copies of $\bfe_j$ do not change the column span, so only the event $\{N_j\geq1\}$, which has probability $1-e^{-\beta}$, matters for the rank. 

\begin{lemma}\label{lem:finite}
For every $\alpha\geq0$ and $\beta>0$,
\begin{equation}\label{eq:finite}
 U_k(\alpha)\geq
 \frac{R_k(\alpha,\beta)-R_k(\alpha,0)}{1-e^{-\beta}}.
\end{equation}
\end{lemma}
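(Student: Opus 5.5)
The plan is to bound the rank increment caused by adding the singleton columns, coordinate by coordinate, conditionally on $M=M_k(\lfloor\alpha k\rfloor)$, and then use exchangeability. Fix a realization of $M$ and let $T=\{j\in[k]:N_j\geq1\}$ be the set of coordinates whose singleton appears in $S_k(\beta)$. Then $\spn{\widetilde M_k(\alpha,\beta)}=\spn{M}+\spn{\bfe_j:j\in T}$. Adding the vectors $\bfe_j$, $j\in T$, one at a time (in any order), each addition raises the rank by at most one. Moreover, it raises the rank only if $\bfe_j\notin\spn{M}$: if $\bfe_j\in\spn{M}$, then it already lies in the current span, which contains $\spn{M}$. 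This gives the deterministic bound
\[
\rank\widetilde M_k(\alpha,\beta)-\rank M\;\leq\;\sum_{j=1}^{k}\mathbf 1\{N_j\geq1\}\,\mathbf 1\{\bfe_j\notin\spn{M}\}.
\]

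Next I take expectations. The $N_j$ are independent of $M$, and $\Prob\{N_j\geq1\}=1-e^{-\beta}$. Hence the expected right-hand side is $(1-e^{-\beta})\sum_{j}\Prob\{\bfe_j\notin\spn{M}\}$. The column distribution is invariant under coordinate permutations, as noted before \eqref{eq:area}, so each of these probabilities equals $U_k(\alpha)$. The sum is therefore $k(1-e^{-\beta})U_k(\alpha)$. Dividing by $k$ gives $R_k(\alpha,\beta)-R_k(\alpha,0)\leq(1-e^{-\beta})U_k(\alpha)$. Since $\beta>0$, we may divide by $1-e^{-\beta}>0$, which yields \eqref{eq:finite}.

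The only point needing care is measurability and finiteness of the expectations over a general field $\F$. Ranks are bounded by $k$, and the events $\{\bfe_j\in\spn{M}\}$ are measurable because $\F$ is standard Borel with measurable operations. Rather than working out these measure-theoretic details, I would simply note that the bound above is pointwise, so the expectations exist and the inequality passes to them. I expect no real obstacle here. The main subtlety is the observation that an $\bfe_j$ already in $\spn{M}$ contributes nothing, regardless of the order in which the singletons are added.
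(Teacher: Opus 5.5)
Your proof is correct and follows the paper's argument: it bounds the rank increment pointwise by the number of $j$ with $N_j\geq1$ and $\bfe_j\notin\spn{M}$. It then uses the independence of the $N_j$ from $M$ and coordinate symmetry to get $\Prob\{\bfe_j\notin\spn{M}\}=U_k(\alpha)$ for every $j$.
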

\begin{proof}
Write $M=M_k(\lfloor\alpha k\rfloor)$, and let $J=\{j\in[k]:N_j\geq1\}$ be the set of coordinates that receive at least one singleton column. Appending a column $\bfe_j$ increases the rank by at most one, and by zero if $\bfe_j\in\spn{M}$. Hence, appending the singleton columns increases the rank of $M$ by at most $|\{j\in J:\bfe_j\notin\spn{M}\}|$. Since $J$ is independent of $M$, and each $j\in[k]$ belongs to $J$ with probability $1-e^{-\beta}$, taking expectations gives
\begin{align*}
 k\bigl(R_k(\alpha,\beta)-R_k(\alpha,0)\bigr)
 &=\E\bigl[\rank\widetilde M_k(\alpha,\beta)-\rank M\bigr]\\
 &\leq\E\Bigl[\sum_{j=1}^{k}\mathbf{1}_{\{j\in J\}}\cdot \,
       \mathbf{1}_{\{\bfe_j\notin\spn{M}\}}\Bigr]\\
 &=\sum_{j=1}^{k}\Prob\{j\in J\}\cdot \,\Prob\{\bfe_j\notin\spn{M}\}\\
 &=(1-e^{-\beta})\,kU_k(\alpha).
\end{align*}
\end{proof}

For $z\in[0,1]$ and $\alpha,\beta\geq0$, let
\begin{align}
 &H(z)=1-P(z)-(1-z)P'(z),\label{eq:H}\\
 &\Phi_{\alpha,\beta}(z)=e^{-\beta-\alpha P'(z)}-\alpha H(z),\label{eq:Phi}\\
 &R_P(\alpha,\beta)=1-\max_{0\leq z\leq1}\Phi_{\alpha,\beta}(z).\label{eq:rankvalue}
\end{align}
We show that $R_k(\alpha,\beta)\to R_P(\alpha,\beta)$ using a rank formula of Coja-Oghlan et al.~\cite{rank}, which holds over every field and for every distribution of the nonzero entries. This formula applies to random sparse matrices whose support is generated differently from our model. In our model, each column chooses its support independently, and the number of nonzero entries in each row is determined by these choices. In~\cite{rank}, one first fixes the number of nonzero entries in every row and every column, and then chooses the support uniformly at random among all patterns with these row and column counts.

Specifically, for $\alpha\geq0$ and $\beta>0$, let $B_k$ be the random $k\times N$ matrix over $\F$ generated as follows. The number of columns is $N\sim\Pois((\alpha+\beta)k)$. For each column $j\in[N]$, the number $w_j$ of its nonzero entries is drawn according to
 \begin{equation}\label{eq:mixweights}
    \Prob\{w_j=1\}=\frac{\beta+\alpha p_1}{\alpha+\beta}\quad\text{and}\quad
    \Prob\{w_j=w\}=\frac{\alpha p_w}{\alpha+\beta},
   \end{equation}
for any $2\leq w\leq W$.
 For each row $i\in[k]$, the number of its nonzero entries $d_i$ is drawn according to $\Pois(\alpha P'(1)+\beta)$. All these numbers are drawn independently. If $d_1+\cdots+d_k\neq w_1+\cdots+w_N$, or if no $k\times N$ zero-one pattern has $d_i$ ones in row $i$ and $w_j$ ones in column $j$ for all $i,j$, the numbers are drawn again. Otherwise, the support of $B_k$ is chosen uniformly at random among all such patterns, and the entries on the support are drawn independently according to $\mu$.

\begin{claim}\label{clm:counts}
For fixed $\alpha\geq0$ and $\beta>0$, as $k\to\infty$, the column-weight proportions of $\widetilde M_k(\alpha,\beta)$ converge in probability to the probabilities in~\eqref{eq:mixweights}. The number of nonzero entries in any fixed row converges in distribution to $\Pois(\alpha P'(1)+\beta)$.
\end{claim}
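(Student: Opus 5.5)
The plan is to treat the two assertions separately. Both follow from elementary Poisson and binomial limit theorems applied to the explicit construction of $\widetilde M_k(\alpha,\beta)=[\,M_k(\lfloor\alpha k\rfloor)\ \ S_k(\beta)\,]$.

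\emph{Column-weight proportions.} The matrix $M_k(\lfloor\alpha k\rfloor)$ has $\lfloor\alpha k\rfloor$ columns with i.i.d.\ weights distributed according to $P$. By the weak law of large numbers, the number of weight-$w$ columns among them is $\alpha p_w k+o_p(k)$ for each $w\in[W]$. The singleton block $S_k(\beta)$ contributes $N_1+\cdots+N_k\sim\Pois(\beta k)$ columns, all of weight one, and Chebyshev's inequality shows this count is $\beta k+o_p(k)$. Hence the total number of columns is $(\alpha+\beta)k+o_p(k)$. Dividing, the fraction of weight-one columns converges in probability to $(\beta+\alpha p_1)/(\alpha+\beta)$, and the fraction of weight-$w$ columns, for $2\le w\le W$, converges to $\alpha p_w/(\alpha+\beta)$. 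Since $\beta>0$, the denominator is bounded away from zero, so the ratios are continuous functions of the normalized counts.

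\emph{Row counts.} Fix a row $i$. Its number of nonzero entries is $X+N_i$, where $X$ is the number of columns of $M_k(\lfloor\alpha k\rfloor)$ whose support contains $i$, and $N_i\sim\Pois(\beta)$ is independent of $X$.
\begin{itemize}
\item Each column independently contains $i$ with probability $q_k=\sum_{w}p_w\,w/k=P'(1)/k$, because a uniform $w$-subset contains $i$ with probability $w/k$.
\item Coefficients drawn from $\mu$ are nonzero, so every coordinate in the support is a genuine nonzero entry.
\item Therefore $X\sim\mathrm{Bin}(\lfloor\alpha k\rfloor,P'(1)/k)$.
\end{itemize}
By the Poisson limit theorem, $X$ converges in distribution to $\Pois(\alpha P'(1))$; one can verify this by convergence of probability generating functions, $(1-q_k+q_kz)^{\lfloor\alpha k\rfloor}\to e^{\alpha P'(1)(z-1)}$. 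Independence and additivity of independent Poisson variables then give $X+N_i\Rightarrow\Pois(\alpha P'(1)+\beta)$. The case $\alpha=0$ is trivial, since then $X=0$.

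The main obstacle is minor. It is the bookkeeping showing that the random total column count normalizes correctly, so that proportions of random-size families converge. This is handled by writing each proportion as a ratio of two quantities, each $k$ times a constant plus $o_p(k)$, and applying the continuous mapping theorem.
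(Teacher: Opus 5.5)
Your proof is correct and follows the same route as the paper. You use a law of large numbers for the binomial weight counts and for the $\Pois(\beta k)$ singleton count, and take the ratio with the normalized total column count, which converges to $\alpha+\beta>0$; the row count is then a $\mathrm{Bin}(\lfloor\alpha k\rfloor,P'(1)/k)$ variable converging to $\Pois(\alpha P'(1))$, plus the independent $\Pois(\beta)$ singleton count. The extra details you supply (Chebyshev for the Poisson total, generating-function convergence, and the case $\alpha=0$) are steps the paper leaves implicit.
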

\begin{proof}
The number of original columns of weight $w$, divided by $k$, converges in probability to $\alpha p_w$. The number of appended singleton columns, divided by $k$, converges in probability to $\beta$. Since the total number of columns divided by $k$ converges to $\alpha+\beta>0$, their proportions give~\eqref{eq:mixweights}.

Each original column independently has a nonzero entry in a fixed row with probability $\sum_w p_w w/k=P'(1)/k$. Thus its row count is binomial with parameters $\lfloor\alpha k\rfloor$ and $P'(1)/k$, and converges to $\Pois(\alpha P'(1))$. Adding the independent $\Pois(\beta)$ singleton count gives the asserted limit.
\end{proof}

The dependence between row counts and the conditioning in $B_k$ are handled in the proof of Lemma~\ref{lem:ranklimit}.

\begin{theorem}[{\cite[Theorem~1.1]{rank}}]\label{thm:coja}
For every field $\F$ defined as in Section~\ref{sec:model}, every distribution $\mu$ on $\F^{*}$, and all $\alpha\geq0$ and $\beta>0$,
\[
 \frac{\rank B_k}{k}\longrightarrow R_P(\alpha,\beta)
 \quad\text{in probability as }k\to\infty.
\]
\end{theorem}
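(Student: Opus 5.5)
The statement is imported from~\cite[Theorem~1.1]{rank}, so the plan is to check that $B_k$ is an instance of the random matrix model there and that the limit in that theorem equals $R_P(\alpha,\beta)$ from \eqref{eq:rankvalue}. I recall the model of~\cite{rank} as follows; this recollection is what the check rests on. There are $n$ variable nodes with i.i.d.\ degrees $\mathbf d_i$ having generating function $D$ and mean $\bar d$. There are $\mathbf m\sim\Pois(\bar d n/\bar k)$ check nodes with i.i.d.\ degrees $\mathbf k_j$ having generating function $K$ and mean $\bar k$. The degrees are resampled until the degree sums agree and a simple pattern exists. The support is then uniform given the degrees, and the nonzero entries are i.i.d.\ from an arbitrary distribution on $\F^*$. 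The theorem states that the normalized rank converges in probability to
\[
1-\max_{0\le z\le 1}\Bigl[D\bigl(1-K'(z)/\bar k\bigr)-\tfrac{\bar d}{\bar k}\bigl(1-K(z)-(1-z)K'(z)\bigr)\Bigr].
\]
Rank is invariant under transposition, so it does not matter which side is called ``rows''.

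We identify the $k$ rows of $B_k$ with variable nodes, so $n=k$ and $D(z)=e^{\lambda(z-1)}$ with $\lambda=\alpha P'(1)+\beta$. We identify the columns with check nodes, so $K(z)=(\beta z+\alpha P(z))/(\alpha+\beta)$ by \eqref{eq:mixweights}. Then
\[
\bar k=K'(1)=\frac{\lambda}{\alpha+\beta},\qquad \frac{\bar d}{\bar k}=\alpha+\beta,
\]
so $\bar d n/\bar k=(\alpha+\beta)k$, which matches $N\sim\Pois((\alpha+\beta)k)$. Next we evaluate the two terms of the objective.

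For the first term,
\[
1-\frac{K'(z)}{\bar k}=1-\frac{\beta+\alpha P'(z)}{\lambda}, \quad\text{so}\quad D\bigl(1-K'(z)/\bar k\bigr)=e^{-\beta-\alpha P'(z)}.
\]
For the second term,
\[
(\alpha+\beta)\bigl(1-K(z)-(1-z)K'(z)\bigr)=\alpha+\beta-\beta z-\alpha P(z)-(1-z)\bigl(\beta+\alpha P'(z)\bigr)=\alpha H(z).
\]
Hence the objective is exactly $\Phi_{\alpha,\beta}$ from \eqref{eq:Phi}, and the limit is $R_P(\alpha,\beta)$.

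The remaining step, which I expect to be the only real obstacle, is verifying the technical hypotheses of~\cite{rank}. Three of them look immediate.
\begin{itemize}
\item The check degrees are bounded by $W$, so every moment of $K$ is finite.
\item The Poisson variable degrees have all moments, and their mean is $\lambda$.
\item The entries are i.i.d.\ from $\mu$ on $\F^*$, over an arbitrary field. This is why the measurability assumptions of Section~\ref{sec:model} were imposed.
\end{itemize}
The hypothesis $\beta>0$ ensures $\lambda>0$ and $\bar k>0$, so the model is nondegenerate even when $\alpha=0$. In that case $B_k$ consists of singletons only, and the formula gives $1-e^{-\beta}$, which agrees with a direct count. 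I would also check any divisibility or gcd-type condition used in~\cite{rank} to guarantee that the conditioning event has probability bounded away from zero. Weight-one columns occur with positive probability when $\beta>0$, which should remove any lattice obstruction. Finally, the maximization range $[0,1]$ and the sign conventions must match those of~\cite{rank}.
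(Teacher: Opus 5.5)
Your proposal is correct and matches the paper's argument: the paper also obtains the statement by applying \cite[Theorem~1.1]{rank} to the transpose of $B_k$. In that application the rows of $B_k$ are variable nodes with $\Pois(\alpha P'(1)+\beta)$ degrees, the columns are check nodes with the weights~\eqref{eq:mixweights}, and your generating-function computation reducing the limit functional to $\Phi_{\alpha,\beta}$ is exactly the ``simplifying the limit'' step the paper leaves implicit. The paper settles your two remaining checks the way you anticipate: $\beta>0$, which produces weight-one columns, ensures the divisibility condition of \cite{rank}, and independent nonzero entries drawn from $\mu$ are covered by \cite[Section~1.2, footnote~2]{rank}.
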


Theorem~\ref{thm:coja} is~\cite[Theorem~1.1]{rank} applied to the transpose of $B_k$, after simplifying the limit. The assumption $\beta>0$ ensures the divisibility condition required there, and independent nonzero entries are
covered by~\cite[Section~1.2, footnote~2]{rank}.

\begin{lemma}\label{lem:ranklimit}
For every fixed $\alpha,\beta\geq0$,
\begin{equation}\label{eq:ranklimit}
 \lim_{k\to\infty}R_k(\alpha,\beta)=R_P(\alpha,\beta).
\end{equation}
\end{lemma}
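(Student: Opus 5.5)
The plan is to transfer the convergence in probability of Theorem~\ref{thm:coja} from $B_k$ to $\widetilde M_k(\alpha,\beta)$. I would show that, after suitable conditioning, the two matrices have the same law, and then pass from convergence in probability to convergence of expectations by bounded convergence, since $0\le \rank/k\le 1$.

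\textbf{The case $\beta=0$.} This case is not covered by Theorem~\ref{thm:coja}, so I would obtain it by continuity. Appending the singleton columns changes the rank by at most the number of coordinates that receive one, which gives
\[
0\le R_k(\alpha,\beta)-R_k(\alpha,0)\le 1-e^{-\beta}.
\]
It therefore suffices to prove the lemma for $\beta>0$ and to check that $R_P(\alpha,\beta)\to R_P(\alpha,0)$ as $\beta\downarrow0$. This holds because $\Phi_{\alpha,\beta}\to\Phi_{\alpha,0}$ uniformly on $[0,1]$, and the maximum is continuous under uniform perturbation.

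\textbf{Coupling for $\beta>0$.} First I would Poissonize the number of original columns. Replacing $\lfloor\alpha k\rfloor$ by $N'\sim\Pois(\alpha k)$ changes the rank by at most $|N'-\alpha k|+1=O(\sqrt k)$ in expectation, which vanishes after dividing by $k$. After Poissonization, the columns of each weight $w$ form independent Poisson families, with means $\alpha p_w k$ for $w\ge2$ and $(\alpha p_1+\beta)k$ for $w=1$. The singleton columns also have uniformly random supports, because their total is $\Pois(\beta k)$ and each lands uniformly, by Poisson splitting. So $\widetilde M_k$ becomes a Poisson number $N\sim\Pois((\alpha+\beta)k)$ of i.i.d.\ columns with the weight law~\eqref{eq:mixweights}, uniform supports and $\mu$ entries.

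Next I would condition on the column weights $(w_j)$ and the row degrees $(d_i)$. Given these, the support pattern is uniform among all compatible zero-one patterns. The reason is that each pattern with these margins has probability $\prod_j\binom{k}{w_j}^{-1}$, which depends only on the column weights. Thus $\widetilde M_k$ and $B_k$ differ only in the joint law of their degree sequences $(w,d)$.

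\textbf{Comparing degree laws (main obstacle).} In $\widetilde M_k$ the row degrees are dependent and exactly sum to $\sum w_j$. In $B_k$ they are i.i.d.\ Poisson, conditioned on that sum and on realizability. I would show that the empirical row-degree distribution of $\widetilde M_k$ converges in probability to $\Pois(\alpha P'(1)+\beta)$. The one-row marginal is given by Claim~\ref{clm:counts}, and the second moment follows from near-independence of two fixed rows. Similarly, the empirical column-weight distribution converges to~\eqref{eq:mixweights}.

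The cleanest route is probably not to invoke a separate robustness statement. Instead I would exhibit the exact conditional correspondence. Given $N$ and $\sum_j w_j=m$, the multiset of row degrees of $\widetilde M_k$ is nearly multinomial, matching i.i.d.\ Poisson conditioned on the sum, up to the constraint that no column hits a row twice. Under bounded weights that constraint fails with probability bounded away from $1$, so it can be absorbed by a contiguity (change of measure) argument: any event of vanishing probability under the law of $B_k$ also has vanishing probability under the law of $\widetilde M_k$. Applying this to the event $\{|\rank/k-R_P|>\varepsilon\}$ completes the proof. Making this contiguity rigorous, or else appealing to the fact that the proof in~\cite{rank} depends only on the empirical degree distributions, is the step I expect to require the most care.
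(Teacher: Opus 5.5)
Your outline follows the paper's proof step for step. It has the same reduction of $\beta=0$ to $\beta>0$ and the same Poissonization with an $O(\sqrt k)$ rank error. It uses Poisson splitting to get i.i.d.\ columns with weight law~\eqref{eq:mixweights}; the singleton columns must be rescaled by samples from $\mu$ first, which leaves the rank unchanged. It then uses uniformity of the support given the degree data, and bounded convergence.

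The gap is in the only nontrivial step, the comparison of degree data, which the paper proves as Lemma~\ref{lem:compare}. You compare only the conditional law of the row degrees given $(N,w_1,\ldots,w_N)$, namely i.i.d.\ Poisson conditioned on the sum and distorted by the no-repeat constraint. You then treat this as the only discrepancy. It is not: the law of $(N,w_1,\ldots,w_N)$ itself also differs. In $B_k$ the row degrees are drawn independently of the column data, and the tuple is kept only if $\sum_i d_i=\sum_j w_j$ and a pattern exists. So the column data are reweighted by the acceptance probability. Write $T=\sum_j w_j$, $\lambda=\alpha P'(1)+\beta$, $\pi_k(T)=\Prob\{\Pois(\lambda k)=T\}$, and $Z_k$ for the overall acceptance probability. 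The correct comparison is then $p_A(a)\le e^{W^2N/k}\,\bigl(Z_k/\pi_k(T)\bigr)\,p_B(a)$ for all degree data $a$ with $p_A(a)>0$. The realizability factors cancel because degree data produced by $A_k$ are always realizable. The factor $Z_k/\pi_k(T)$ is unbounded as $T$ varies, so your contiguity claim does not follow without controlling it.

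To control it, you need three estimates:
\begin{enumerate}
\item[(i)] $Z_k\le\Prob\{\sum_i d_i=\sum_j w_j\}\le\max_{T'}\pi_k(T')$;
\item[(ii)] by Stirling's formula, $\max_{T'}\pi_k(T')\le C_t\,\pi_k(T)$ whenever $|T-\lambda k|\le t\sqrt k$;
\item[(iii)] by Chebyshev's inequality, under $A_k$ both $|T-\lambda k|\le t\sqrt k$ and $N\le2(\alpha+\beta)k$ hold with probability at least $1-\delta$.
\end{enumerate}
The condition $N\le2(\alpha+\beta)k$ is what keeps the no-repeat factor $e^{W^2N/k}$ bounded. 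Your remark that the constraint fails with probability bounded away from one needs exactly this uniformity. Here it also matters that $\E[T]=\lambda k$ coincides with the mean of $\sum_i d_i$. Otherwise $B_k$ would be conditioned on a large-deviation event, and the two degree laws would be asymptotically singular. The result is $\Prob\{A_k\in\mathcal B\}\le C_\delta\,\Prob\{B_k\in\mathcal B\}+\delta$ for every set $\mathcal B$, which suffices for your final step. Your fallback of appealing to the internals of~\cite{rank} does not replace this argument, since Theorem~\ref{thm:coja} is available only for $B_k$.
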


\begin{proof}
Assume first that $\beta>0$. The augmented matrix differs from $B_k$ in two ways. First, the number of columns taken from $M_k$ is fixed rather than Poisson. Second, its support is generated column by column whereas the support of $B_k$ is chosen uniformly given all row and column counts.  We
remove the first difference by passing to a Poisson number of columns, and then show that the second difference does not affect the limit. The case $\beta=0$ is treated at the end.

First, let $\widetilde M_k'(\alpha,\beta)$ be defined as $\widetilde M_k(\alpha,\beta)$, except that the number $X$ of columns taken
from $M_k$ is an independent $\Pois(\alpha k)$ random variable instead of
$\lfloor\alpha k\rfloor$. Since adding or removing a column changes the rank
by at most one, the expected ranks of $\widetilde M_k(\alpha,\beta)$ and
$\widetilde M_k'(\alpha,\beta)$ differ by at most
\begin{align}\label{eq:jensen}
 \E\bigl|X-\lfloor\alpha k\rfloor\bigr|\leq\E|X-\alpha k|+1\leq\sqrt{\alpha k}+1,
\end{align}
where the last step follows from Cauchy--Schwarz and $\operatorname{Var}(X)=\alpha k$.

Now let $A_k$ be the random matrix of Section~\ref{sec:model} with a
$\Pois((\alpha+\beta)k)$ number of columns, coefficient distribution $\mu$, and column weights distributed according to~\eqref{eq:mixweights}.
A column of $A_k$ can be generated by choosing, with probability $\alpha/(\alpha+\beta)$,
a column whose weight is distributed according to $P$, and otherwise a weight-one column. Indeed, this gives weight one with probability
$(\beta+\alpha p_1)/(\alpha+\beta)$ and weight $w\geq2$ with probability $\alpha p_w/(\alpha+\beta)$. Note that $A_k$ can equivalently be obtained from $\widetilde M_k'(\alpha,\beta)$ by independently rescaling the added singleton columns by samples from $\mu$ and uniformly permuting all columns, so the two matrices have the same rank distribution.

By the splitting property of the Poisson distribution if a $\Pois(\lambda)$ number of items are independently assigned types, where each item has type $t$ with probability $\pi_t$, then the numbers of items of the different types are independent, and the number of items of type $t$ is $\Pois(\lambda\pi_t)$. In $A_k$, the number of columns is $\Pois((\alpha+\beta)k)$, and each column independently is of the first type with probability $\alpha/(\alpha+\beta)$, or is a weight-one column with its nonzero entry at coordinate $j$ with probability $\beta/((\alpha+\beta)k)$, for each $j\in[k]$. Applying the splitting property with these $k+1$ types, the number of columns of the first type is $\Pois(\alpha k)$, the numbers of weight-one columns at the coordinates $j\in[k]$ are $\Pois(\beta)$, and all
these numbers are independent.
Hence, $A_k$ and $\widetilde M_k'(\alpha,\beta)$
have the same distribution of columns, up to the order of the columns and the values of the nonzero entries of the weight-one columns of the second type, neither of
which affects the rank.
Combining this with~\eqref{eq:jensen} gives
\[
 \Bigl|R_k(\alpha,\beta)-\frac1k\E[\rank A_k]\Bigr|\leq\frac{\sqrt{\alpha k}+1}{k}
 \longrightarrow0.
\]
Therefore, it suffices to show that $\frac1k\E[\rank A_k]\to R_P(\alpha,\beta)$.
Since $0\leq\rank A_k/k\leq1$, this follows, by the bounded convergence
theorem, once we show that $\rank A_k/k\to R_P(\alpha,\beta)$ in probability.

Second, $A_k$ differs from $B_k$ only in how the support is generated. More precisely, in $A_k$ each column chooses its support independently, whereas in $B_k$ the support is chosen uniformly given all row and column counts. We show that this does not affect the limit. We refer to number of columns, the column weights, and the row weights of a matrix together as the \emph{degree data}.
Given the degree data, $A_k$ and $B_k$ have the same distribution. For $B_k$, this holds by definition. For $A_k$, given the column weights, every zero-one pattern with these column weights is the support with the same probability $\prod_{j=1}^N\binom{k}{w_j}^{-1}$, so given also the row weights, the
support is uniform among all patterns with these weights. In both matrices,
the entries on the support are independent samples from $\mu$. Hence, the
two matrices differ only in the distribution of their degree data.
Corollary~\ref{cor:compare} in the Appendix shows that this difference does
not affect the convergence in Theorem~\ref{thm:coja}, that is,
$\rank A_k/k\to R_P(\alpha,\beta)$ in probability.

Finally, let $\beta=0$. The idea is to approximate $\beta=0$ by a small $\beta'>0$, for which the lemma is already proved, using that a few singleton columns change both $R_k$ and $R_P$ only slightly. Fix $\beta'>0$.

First, appending the singleton columns does not decrease the rank, and increases it by at most the number of coordinates $j$ with $N_j\geq1$, whose expectation is $k(1-e^{-\beta'})$. Hence,
\[
 0\leq R_k(\alpha,\beta')-R_k(\alpha,0)\leq1-e^{-\beta'}.
\]
Second, for every $z\in[0,1]$,
$$\Phi_{\alpha,0}(z)-\Phi_{\alpha,\beta'}(z)=(1-e^{-\beta'})e^{-\alpha P'(z)},$$
which lies in $[0,1-e^{-\beta'}]$. Hence, we obtain
\[
 0\leq R_P(\alpha,\beta')-R_P(\alpha,0)\leq1-e^{-\beta'}.
\]
Combining these two bounds with the triangle inequality,
\begin{align*}
 |R_k(\alpha,0)-R_P(\alpha,0)|
 &\leq|R_k(\alpha,0)-R_k(\alpha,\beta')|\\
      &\quad+|R_k(\alpha,\beta')-R_P(\alpha,\beta')|\\
 &\quad+|R_P(\alpha,\beta')-R_P(\alpha,0)|\\
 &\leq|R_k(\alpha,\beta')-R_P(\alpha,\beta')|+2(1-e^{-\beta'}).
\end{align*}
By the case $\beta>0$, the first term on the right tends to zero as
$k\to\infty$. Hence,
$\limsup_{k\to\infty}|R_k(\alpha,0)-R_P(\alpha,0)|\leq2(1-e^{-\beta'})$, and
letting $\beta'\downarrow0$ completes the proof.
\end{proof}

Let $\Phi_\alpha=\Phi_{\alpha,0}$, and define
\begin{align}
 \cA_P(\alpha)&=\argmax_{z\in[0,1]}\Phi_\alpha(z),\nonumber\\
 \cU_P(\alpha)&=\min_{z\in\cA_P(\alpha)}e^{-\alpha P'(z)}.\label{eq:envelope}
\end{align}

\begin{lemma}\label{lem:survival}
For every fixed $\alpha\geq0$,
\begin{equation}\label{eq:survival}
 \liminf_{k\to\infty}U_k(\alpha)\geq\cU_P(\alpha).
\end{equation}
\end{lemma}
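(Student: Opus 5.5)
Combine Lemma~\ref{lem:finite} with Lemma~\ref{lem:ranklimit}, then let $\beta\downarrow0$. Fix $\alpha\ge0$ and $\beta>0$. Taking $\liminf_{k\to\infty}$ in \eqref{eq:finite} and applying Lemma~\ref{lem:ranklimit} at both $(\alpha,\beta)$ and $(\alpha,0)$ gives
\[
 \liminf_{k\to\infty}U_k(\alpha)\geq\frac{R_P(\alpha,\beta)-R_P(\alpha,0)}{1-e^{-\beta}}
 =\frac{\max_z\Phi_{\alpha}(z)-\max_z\Phi_{\alpha,\beta}(z)}{1-e^{-\beta}}.
\]
The left-hand side does not depend on $\beta$. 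So it suffices to show that the right-hand side tends to $\cU_P(\alpha)$ as $\beta\downarrow0$, or at least that its $\liminf$ is at least $\cU_P(\alpha)$.

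The right-hand side is a one-sided derivative of a max-function, which is a Danskin-type envelope computation. Write $\Phi_{\alpha,\beta}(z)=\Phi_\alpha(z)-(1-e^{-\beta})g(z)$ with $g(z)=e^{-\alpha P'(z)}$, using the identity already noted in the proof of Lemma~\ref{lem:ranklimit}. Set $t=1-e^{-\beta}\downarrow0$ and $m(t)=\max_{z\in[0,1]}\bigl(\Phi_\alpha(z)-tg(z)\bigr)$. We need $\liminf_{t\downarrow0}(m(0)-m(t))/t\ge\min_{z\in\cA_P(\alpha)}g(z)=\cU_P(\alpha)$.

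\emph{Step 1.} Pick a maximizer $z_t$ of $\Phi_\alpha-tg$. Since $m(0)\ge\Phi_\alpha(z_t)$, we get
\[
 m(0)-m(t)\ge \Phi_\alpha(z_t)-\Phi_\alpha(z_t)+t\,g(z_t)=t\,g(z_t).
\]
Hence $(m(0)-m(t))/t\ge g(z_t)$.

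\emph{Step 2.} Show that every limit point of $z_t$ as $t\downarrow0$ lies in $\cA_P(\alpha)$. Both $\Phi_\alpha$ and $g$ are continuous on the compact interval $[0,1]$, being built from polynomials and the exponential. Therefore $\Phi_\alpha(z_t)\ge m(t)\ge m(0)-t\max g\to m(0)$. By continuity, any limit point $z^*$ satisfies $\Phi_\alpha(z^*)=m(0)$, so $z^*\in\cA_P(\alpha)$.

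\emph{Step 3.} Along a sequence $t_n\downarrow0$ realizing the $\liminf$, pass to a subsequence with $z_{t_n}\to z^*$. Continuity of $g$ then gives $\liminf g(z_{t_n})\ge g(z^*)\ge\cU_P(\alpha)$.

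Two small points need checking. First, $\cA_P(\alpha)$ is nonempty and closed, so the minimum in \eqref{eq:envelope} is attained. Second, the case $\alpha=0$ is trivial: then $\Phi_0\equiv1$, $\cU_P(0)=1$, and $U_k(0)=1$.

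The main subtlety is the order of limits. We take $k\to\infty$ first at fixed $\beta$, which is legitimate because Lemma~\ref{lem:ranklimit} holds for each fixed $(\alpha,\beta)$. We then take $\beta\downarrow0$ only on the limiting expression. The envelope argument in Steps 1--3 is the only real work, and compactness makes it routine.
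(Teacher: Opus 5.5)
Your proposal is correct and follows essentially the same route as the paper. The paper also fixes $\beta>0$, combines Lemma~\ref{lem:finite} with Lemma~\ref{lem:ranklimit}, bounds the difference of maxima below by $(1-e^{-\beta})e^{-\alpha P'(z_\beta)}$ at a maximizer $z_\beta$ of $\Phi_{\alpha,\beta}$, and then takes $\beta_h=1/h$ with a convergent subsequence of maximizers, whose limit lies in $\cA_P(\alpha)$ by the same uniform-perturbation and continuity argument you give.
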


The proof is given in Appendix~\ref{app:rank}. It applies Lemmas~\ref{lem:finite} and~\ref{lem:ranklimit} for a fixed $\beta>0$, and then lets $\beta\downarrow0$, so that maximizers of $\Phi_{\alpha,\beta}$ approach a maximizer of $\Phi_\alpha$.

\section{The Integral Inequality}\label{sec:integral}

By Lemma~\ref{lem:survival}, the limit inferior of $U_k(\alpha)$ is at least
$\cU_P(\alpha)$. In this section, we show that the integral of $\cU_P$ is at
least $\pi/4$, which, together with~\eqref{eq:area}, proves
Theorem~\ref{thm:main}.

\begin{proposition}\label{prop:integral}
For every degree distribution $P$ with column weights at most $W$,
\begin{equation}\label{eq:integral}
 \int_0^\infty\cU_P(\alpha)\,d\alpha\geq\frac{\pi}{4}.
\end{equation}
\end{proposition}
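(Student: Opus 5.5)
The plan is to make $\cU_P$ explicit through the stationarity condition of $\Phi_\alpha$, then to combine an envelope identity with the Cauchy--Schwarz inequality, using $\int_0^1\sqrt{-\log v}\,dv=\Gamma(3/2)=\sqrt{\pi}/2$.

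\textbf{Step 1: maximizers are fixed points.} Differentiating \eqref{eq:H} gives $H'(z)=-(1-z)P''(z)$, and hence
\[
\partial_z\Phi_\alpha(z)=\alpha P''(z)\bigl[(1-z)-e^{-\alpha P'(z)}\bigr].
\]
Every interior maximizer $z^*$ with $P''(z^*)>0$ therefore satisfies $1-z^*=e^{-\alpha P'(z^*)}$, which is the peeling fixed-point equation. For such $z^*$ we get $\cU_P(\alpha)=1-z^*$. The remaining cases are handled separately:
\begin{itemize}
\item $z=1$ is never a maximizer for $\alpha>0$, since the bracket is negative near $1$;
\item $z=0$ gives $\cU_P=1$;
\item if $P''\equiv0$ then $P(z)=z$, which is the uncoded case and gives integral $1$.
\end{itemize}
For $v\in(0,1)$ set
\[
a(v)=\frac{-\log v}{P'(1-v)}.
\]
This is the value of $\alpha$ at which $1-v$ is a fixed point. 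Also set $L(v)=\lambda\{\alpha:\cU_P(\alpha)>v\}$, where $\lambda$ is Lebesgue measure, so that $\int_0^\infty\cU_P\,d\alpha=\int_0^1L(v)\,dv$. Finally, introduce the probability measure $dm=P'(1-v)\,dv$ on $[0,1]$; it has total mass $1$ because $P(1)-P(0)=1$.

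\textbf{Step 2: an envelope identity.} Let $m(\alpha)=\max_z\Phi_\alpha(z)$. For each fixed $z$, the map $\alpha\mapsto\Phi_\alpha(z)$ is convex, so $m$ is convex and differentiable almost everywhere. Its derivative is $\partial_\alpha\Phi_\alpha(z^*)=-P'(z^*)e^{-\alpha P'(z^*)}-H(z^*)$. Using $e^{-\alpha P'(z^*)}=1-z^*$, this equals $-(1-P(z^*))$.

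The boundary values are $m(0)=1$ and $m(\infty)=0$. The second holds because $H\geq0$ and vanishes only at $1$, while $\Phi_\alpha(1)=e^{-\alpha P'(1)}\to0$. Integrating the derivative therefore gives
\[
\int_0^\infty\bigl(1-P(1-\cU_P(\alpha))\bigr)\,d\alpha=1.
\]
Now write $1-P(1-u)=\int_0^uP'(1-v)\,dv$ and apply Fubini. This yields $\int L\,dm=1$.

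A direct computation gives the same value for $a$: $\int a\,dm=\int_0^1(-\log v)\,dv=1$.

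\textbf{Step 3: Cauchy--Schwarz.} By Cauchy--Schwarz,
\[
\frac{\pi}{4}=\Bigl(\int_0^1\sqrt{-\log v}\,dv\Bigr)^2\leq\int_0^1L\,dv\cdot\int_0^1\frac{-\log v}{L(v)}\,dv.
\]
Since $-\log v\,dv=a(v)\,dm$, it suffices to prove
\[
\int\frac{a}{L}\,dm\leq1.
\]
If $L\geq a$ held pointwise this would be immediate, because $a/L\leq1$ and $m$ has mass $1$. That is exactly the peeling situation of~\cite{hby}.

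\textbf{Step 4: the obstacle, Maxwell jumps.} Under optimal decoding the global maximizer can jump between competing fixed points, so $L\geq a$ can fail. I would first show that $\cU_P$ is nonincreasing. Then, off a countable union of jump intervals $I$, the function $L$ coincides with a branch of the fixed-point curve, so $L=a$ there. On each jump interval $I$, $L$ is constant, equal to the jump location $\alpha_M$.

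The key claim is an equal-area rule in the measure $m$:
\[
\int_I a\,dm=\alpha_M\,m(I).
\]
This would follow by localizing the envelope identity of Step 2, i.e.\ comparing $\int L\,dm$ and $\int a\,dm$ interval by interval, or directly from $\Phi_{\alpha_M}(z_1)=\Phi_{\alpha_M}(z_2)$. Granting the claim, $\int_Ia/L\,dm=m(I)$, and summing over all pieces gives $\int a/L\,dm\leq1$, which completes the proof.

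The hard part is this equal-area step. The naive equal-area integral from $\partial_z\Phi$ carries the weight $P''$ rather than $P'$, so some integration by parts will be needed to convert it to the measure $m$. If exact equality turns out to be inconvenient, I would settle for the one-sided inequality $\int_I a\,dm\leq\alpha_M\,m(I)$, which is all the argument uses.
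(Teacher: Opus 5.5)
Your architecture is the paper's. Your $L(v)$ equals $\sigma(1-v)$, where $\sigma$ is the generalized inverse of the largest maximizer $\zeta$. Your target $\int a/L\,dm\le 1$ is the paper's normalization $\int_0^1-\log(1-z)/\sigma(z)\,dz=1$. Splitting $(0,1)$ into attained levels (where $L=a$) and jump intervals (where $L\equiv\alpha_M$) is exactly how the paper proves it.

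The gap is that you stop at the step the paper singles out as the main difficulty. The equal-area rule $\int_I a\,dm=\alpha_M\,m(I)$ is granted, not proved. Without it nothing controls $a/L$ on the jump intervals, which is where $L\ge a$ fails.

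The missing idea is to compare $\Phi_{\alpha_M}$ only at the two endpoints $z^-<z^+$ of the jump, where the fixed-point relation $1-z=e^{-\alpha_MP'(z)}$ can be substituted. This gives $\Phi_{\alpha_M}(z)=1-\alpha_M-\Psi(z)$ for $z\in\{z^-,z^+\}$, with
\[
\Psi(z)=z+(1-z)\log(1-z)-\alpha_MP(z),\qquad \Psi'(z)=-\log(1-z)-\alpha_MP'(z).
\]
So $\Phi_{\alpha_M}(z^-)=\Phi_{\alpha_M}(z^+)$ is literally $\int_{z^-}^{z^+}\bigl(-\log(1-z)-\alpha_MP'(z)\bigr)\,dz=0$, which is $\int_I a\,dm=\alpha_M m(I)$ in your variables. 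Your integration-by-parts instinct also works. In $\int_{z^-}^{z^+}\partial_z\Phi_{\alpha_M}\,dz=0$, integrate the $(1-z)P''$ term by parts, integrate the $P''e^{-\alpha_MP'}$ term exactly, and use the fixed-point relation for the boundary terms. Either way, you also need that $z^-$, and not only $z^+=\zeta(\alpha_M)$, maximizes $\Phi_{\alpha_M}$. This follows by letting $\alpha\uparrow\alpha_M$ in $\Phi_\alpha(\zeta(\alpha))\ge\Phi_\alpha(y)$ and using joint continuity in $(\alpha,z)$.

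Your Step 2 identity cannot replace this local rule. Combined with $L=a$ off the jumps, $\int L\,dm=1=\int a\,dm$ only gives $\sum_I\bigl(\alpha_I m(I)-\int_I a\,dm\bigr)=0$. With several jumps at different locations $\alpha_I$, this does not imply $\sum_I\alpha_I^{-1}\int_I a\,dm\le\sum_I m(I)$. Localizing it requires the continuity of the value function $m(\alpha)=\max_z\Phi_\alpha(z)$ at $\alpha_M$ together with its closed form $1-\alpha-\Psi_\alpha(\zeta(\alpha))$, where $\Psi_\alpha$ is $\Psi$ with $\alpha_M$ replaced by $\alpha$. That is the same computation.

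Step 2 is, however, useful for the monotonicity you left as ``I would first show''. By Danskin's theorem, the right derivative of the convex value function is $m'_+(\alpha)=\max_{z\in\cA_P(\alpha)}\bigl(P(z)-1\bigr)=P(\zeta(\alpha))-1$. This is nondecreasing and right-continuous, so $\zeta$ is nondecreasing and right-continuous because $P$ is strictly increasing. The same argument at $\alpha=0$, where $\cA_P(0)=[0,1]$, gives $m'_+(0)=-1$ and hence $\zeta(\alpha)\to0$ as $\alpha\downarrow0$. You need this so that $L>0$ on $(0,1)$. This is a neat alternative to the paper's crossing argument.
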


If $p_1=1$, the integral equals one, as shown in the proof below. For $p_1<1$, the proof has three steps. First, we show that $\cU_P(\alpha)$ is determined by the largest maximizer of $\Phi_\alpha$. Second, we rewrite the integral of $\cU_P$ in terms of the inverse of this maximizer as a function of $\alpha$. Third, we prove an identity for this inverse, from which the bound follows by the Cauchy--Schwarz inequality. The main difficulty is that the largest maximizer may jump as $\alpha$ increases, and the identity must remain valid across such jumps. Throughout the following lemmas, we assume that $p_1<1$.

For $\alpha>0$, let
\[
\zeta(\alpha)=\max\cA_P(\alpha)
\]
be the largest maximizer of $\Phi_\alpha$ on $[0,1]$, which exists since $\cA_P(\alpha)$ is compact. The first lemma characterizes the maximizers of $\Phi_\alpha$. In particular, it shows that $\cU_P(\alpha)$ is determined by $\zeta(\alpha)$.

\begin{lemma}\label{lem:maximizers}
Let $p_1<1$ and $\alpha>0$. Every maximizer $z$ of $\Phi_\alpha$ satisfies $z<1$ and $\alpha P'(z)=-\log(1-z)$. Moreover,
\begin{equation}\label{eq:zeta}
 \cU_P(\alpha)=1-\zeta(\alpha).
\end{equation}
\end{lemma}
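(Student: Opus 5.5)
The plan is to differentiate $\Phi_\alpha$ directly. Since $H(z)=1-P(z)-(1-z)P'(z)$, we get $H'(z)=-P'(z)+P'(z)-(1-z)P''(z)=-(1-z)P''(z)$. Hence
\[
 \Phi_\alpha'(z)=-\alpha P''(z)e^{-\alpha P'(z)}+\alpha(1-z)P''(z)
 =\alpha P''(z)\bigl((1-z)-e^{-\alpha P'(z)}\bigr).
\]
Because $p_1<1$, some $p_w>0$ with $w\geq2$, and so $P''(z)>0$ for all $z\in(0,1]$. On $(0,1]$ the sign of $\Phi_\alpha'$ is therefore the sign of $g(z)=(1-z)-e^{-\alpha P'(z)}$.

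The first step is to show that no maximizer lies at $z=1$ or $z=0$, and that every interior maximizer is a critical point. At $z=1$ we have $g(1)=-e^{-\alpha P'(1)}<0$, so $\Phi_\alpha$ is strictly decreasing near $1$ and $1$ is not a maximizer. At $z=0$ we have $g(0)=1-e^{-\alpha p_1}\geq0$. If $p_1>0$, then $g>0$ near $0$, so $\Phi_\alpha$ increases there and $0$ is not a maximizer. If $p_1=0$, then $g(z)=1-z-e^{-\alpha P'(z)}$ with $P'(z)=O(z)$, and the expansion is $g(z)=-z+\alpha P''(0)z+O(z^2)$. This is not automatically positive, so $0$ may be a maximizer. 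In that case, however, $\alpha P'(0)=0=-\log(1-0)$, so the claimed identity still holds. At an interior maximizer, $\Phi_\alpha'(z)=0$ together with $P''(z)>0$ gives $1-z=e^{-\alpha P'(z)}$, which is $\alpha P'(z)=-\log(1-z)$. So in every case each maximizer $z$ satisfies $z<1$ and $e^{-\alpha P'(z)}=1-z$. The case $z=0$ with $p_1>0$ has already been excluded, and for $p_1=0$ the relation holds trivially at $0$.

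The second step computes $\cU_P$. By the first step, $e^{-\alpha P'(z)}=1-z$ on $\cA_P(\alpha)$. Therefore $\cU_P(\alpha)=\min_{z\in\cA_P(\alpha)}(1-z)=1-\max\cA_P(\alpha)=1-\zeta(\alpha)$, where the maximum exists because $\cA_P(\alpha)$ is a nonempty compact set ($\Phi_\alpha$ is continuous on $[0,1]$).

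The only delicate point is the boundary $z=0$ when $p_1=0$, because $\Phi_\alpha'(0)=0$ there and the sign argument gives no information. I would handle it as described: the target identity is satisfied at $z=0$ regardless, so no strict-monotonicity argument is needed there. Everything else is a short calculus computation using $P''>0$ on $(0,1]$.
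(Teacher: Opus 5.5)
Your proposal is correct and follows essentially the same route as the paper. Both arguments use the derivative formula $\Phi_\alpha'(z)=\alpha P''(z)\bigl(1-z-e^{-\alpha P'(z)}\bigr)$, exclude $z=1$ via $\Phi_\alpha'(1)<0$, exclude $z=0$ when $p_1>0$, check the identity trivially at $z=0$ when $p_1=0$, and apply the stationarity condition at interior maximizers. The only difference is cosmetic: you compute $\cU_P(\alpha)=\min_{z\in\cA_P(\alpha)}(1-z)$ directly after substituting the identity, whereas the appendix first uses strict monotonicity of $P'$ to place the minimum at $\zeta(\alpha)$; both are valid.
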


\begin{proof}[Proof sketch]
Assume for this sketch that the maximizers lie in $(0,1)$; the endpoint cases are checked in Appendix~\ref{app:integral}. Since $H'(z)=-(1-z)P''(z)$, differentiation gives
\[
 \Phi_\alpha'(z)=\alpha P''(z)\bigl(1-z-e^{-\alpha P'(z)}\bigr).
\]
Because $p_1<1$, we have $P''(z)>0$ for $0<z<1$. Hence, at any interior maximizer $z^*$, the condition $\Phi_\alpha'(z^*)=0$ implies
\[
 1-z^*=e^{-\alpha P'(z^*)},
 \qquad \alpha P'(z^*)=-\log(1-z^*).
\]
Substituting this identity into the definition of $\cU_P(\alpha)$ gives
\begin{align*}
 \cU_P(\alpha)
 &=\min_{z\in\cA_P(\alpha)}e^{-\alpha P'(z)}\\
 &=\min_{z\in\cA_P(\alpha)}(1-z)
 =1-\zeta(\alpha),
\end{align*}
since $\zeta(\alpha)$ is the largest maximizer.
\end{proof}

By~\eqref{eq:zeta}, the integral in~\eqref{eq:integral} depends only on the function $\zeta$. The next lemma collects the properties of $\zeta$ that we need. The proofs of both lemmas are given in the Appendix.

\begin{lemma}\label{lem:zeta}
Let $p_1<1$. The function $\zeta$ is nondecreasing and right-continuous on $(0,\infty)$, and $\zeta(\alpha)\to0$ as $\alpha\downarrow0$ and $\zeta(\alpha)\to1$ as $\alpha\to\infty$.
\end{lemma}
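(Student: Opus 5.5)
Monotonicity will come from a comparison of two maximizers, where the sign change of $\Phi_\alpha'$ does the work. Right-continuity will follow from upper semicontinuity of the argmax, and the two limits from the fixed-point equation of Lemma~\ref{lem:maximizers} together with an integral estimate. Throughout we use the identity from the proof of Lemma~\ref{lem:maximizers}, valid for $t\in[0,1]$:
\[
 \Phi_\alpha'(t)=\alpha P''(t)\bigl(1-t-e^{-\alpha P'(t)}\bigr).
\]
For $0\le z'<z\le1$ it gives
\[
 D_{z',z}(\alpha):=\frac{\Phi_\alpha(z)-\Phi_\alpha(z')}{\alpha}=\int_{z'}^{z}P''(t)\bigl(1-t-e^{-\alpha P'(t)}\bigr)\,dt .
\]
Since $p_1<1$, both $P''(t)>0$ and $P'(t)>0$ on $(0,1)$. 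Hence the integrand is strictly increasing in $\alpha$, and so is $D_{z',z}$.

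\emph{Monotonicity.} Let $0<\alpha<\alpha'$, let $z$ be a maximizer of $\Phi_\alpha$ and $z'$ a maximizer of $\Phi_{\alpha'}$, and suppose $z'<z$. Maximality of $z$ gives $D_{z',z}(\alpha)\ge0$, and maximality of $z'$ gives $D_{z',z}(\alpha')\le0$. This contradicts strict monotonicity of $D_{z',z}$. Hence every maximizer at $\alpha'$ is at least every maximizer at $\alpha$, and in particular $\zeta(\alpha)\le\zeta(\alpha')$.

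\emph{Right-continuity.} The map $(\alpha,z)\mapsto\Phi_\alpha(z)$ is jointly continuous on $(0,\infty)\times[0,1]$. Take $\alpha_n\downarrow\alpha$. Along any subsequence on which $\zeta(\alpha_n)$ converges, the limit is a maximizer of $\Phi_\alpha$ (Berge's maximum theorem, or passing to the limit in $\Phi_{\alpha_n}(\zeta(\alpha_n))\ge\Phi_{\alpha_n}(y)$ for each fixed $y$). So $\lim\zeta(\alpha_n)\le\zeta(\alpha)$. Monotonicity gives the reverse inequality.

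\emph{Limit as $\alpha\downarrow0$.} By Lemma~\ref{lem:maximizers}, $-\log(1-\zeta(\alpha))=\alpha P'(\zeta(\alpha))\le\alpha P'(1)$. Hence $0\le\zeta(\alpha)\le1-e^{-\alpha P'(1)}\to0$.

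\emph{Limit as $\alpha\to\infty$.} This is the step I expect to be the main obstacle, because $\Phi_\alpha$ must be compared uniformly over all competitors $z'\le z_0$, including those near $0$ where $P'$ may be small. Fix $z_0\in(0,1)$ and $z_1\in(z_0,1)$. We show that for large $\alpha$, $D_{z',z_1}(\alpha)>0$ for every $z'\le z_0$, so that all maximizers exceed $z_0$. Split the integral at $z_0$.

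On $[z_0,z_1]$, $P'(t)\ge P'(z_0)>0$, so for $\alpha$ large the integrand is at least $\tfrac12P''(t)(1-t)$. This gives a contribution of at least a constant $c>0$ independent of $\alpha$.

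On $[z',z_0]$, drop the nonnegative term $(1-t)P''(t)$ and bound the rest by
\[
 \int_0^{z_0}P''(t)e^{-\alpha P'(t)}\,dt=\frac{e^{-\alpha p_1}-e^{-\alpha P'(z_0)}}{\alpha}\le\frac1\alpha\to0 .
\]
So for large $\alpha$ we get $D_{z',z_1}(\alpha)\ge c-1/\alpha>0$ uniformly in $z'\le z_0$. Therefore $\zeta(\alpha)>z_0$ eventually. Since $z_0<1$ was arbitrary and $\zeta\le1$, we conclude $\zeta(\alpha)\to1$.
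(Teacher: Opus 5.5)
Your proof is correct. Monotonicity and right-continuity follow the paper. Your $D_{z',z}(\alpha)$ is exactly the paper's identity \eqref{eq:crossing}, because $\int_{z'}^{z}P''(t)(1-t)\,dt=H(z')-H(z)$ and the substitution $u=P'(t)$ turns the exponential term into $\int_{P'(z')}^{P'(z)}e^{-\alpha u}\,du$. The limiting argument for right-continuity is also the same as the paper's.

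You genuinely differ at the limit $\alpha\downarrow0$. The paper makes a direct value comparison with $z=0$: it expands $e^{-x}$ to second order and uses that $g(z)=1-P(z)+zP'(z)$ is strictly increasing. You instead read the limit off the stationarity relation of Lemma~\ref{lem:maximizers} and obtain the explicit bound $\zeta(\alpha)\le 1-e^{-\alpha P'(1)}$. This is legitimate, since the paper proves Lemma~\ref{lem:maximizers} without using the present lemma. Your route is shorter and quantitative. The paper's route has the merit of not relying on the endpoint analysis inside Lemma~\ref{lem:maximizers}.

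For $\alpha\to\infty$ you use the same value-comparison idea as the paper. The paper compares with the endpoint $z=1$, where $H(1)=0$, and gets $\Phi_\alpha(z)-\Phi_\alpha(1)<1-\alpha H(1-\varepsilon)$ for all $z\le 1-\varepsilon$ in one line. Your split at $z_0$ is unnecessary even with an interior comparison point. The whole exponential term satisfies $\int_{P'(z')}^{P'(z_1)}e^{-\alpha u}\,du\le 1/\alpha$ uniformly in $z'$. Together with $H(z')\ge H(z_0)$, this gives $D_{z',z_1}(\alpha)\ge H(z_0)-H(z_1)-1/\alpha$ directly, so the obstacle you anticipated near $z=0$ does not actually arise.
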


By Lemma~\ref{lem:zeta}, $\zeta$ increases from $0$ to $1$, possibly with
jumps. This allows us to describe $\zeta$ through its inverse. For $0<z<1$, let
\[
 \sigma(z)=\inf\{\alpha>0:\zeta(\alpha)\geq z\}
\]
be the generalized inverse of $\zeta$, that is, the first value of $\alpha$ at which $\zeta$ reaches the level $z$.
By Lemma~\ref{lem:zeta}, $\sigma(z)$ is finite and positive for every $z\in(0,1)$. In the proof of Proposition~\ref{prop:integral}, we show that the integral of $\cU_P$ equals the integral of $\sigma$. The key step is the following identity. At levels where $\zeta$ is continuous, it follows from the condition $\alpha P'(z)=-\log(1-z)$ in Lemma~\ref{lem:maximizers}. Across a jump of $\zeta$, it follows from the fact that both endpoints of the jump maximize $\Phi_\alpha$ at the time of the jump.

\begin{lemma}\label{lem:normalization}
Let $p_1<1$. Then
\begin{equation}\label{eq:normalization}
 \int_0^1\frac{-\log(1-z)}{\sigma(z)}\,dz=1.
\end{equation}
\end{lemma}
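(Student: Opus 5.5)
The plan is to split $(0,1)$ into levels $z$ that $\zeta$ actually attains and levels that $\zeta$ jumps over. On the first set we show the integrand equals $P'(z)$ pointwise. On each jump interval we show the two integrals agree. Summing then gives $\int_0^1 P'(z)\,dz=P(1)-P(0)=1$.

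\emph{Step 1 (attained levels).} Let $z\in(0,1)$ and put $\alpha_0=\sigma(z)$, which is finite and positive by Lemma~\ref{lem:zeta}. Since $\zeta$ is nondecreasing and right-continuous, $\zeta(\alpha_0)\ge z\ge\zeta(\alpha_0^-)$, where $\zeta(\alpha_0^-)$ is the left limit.

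We first show that $\zeta(\alpha_0^-)$ is itself a maximizer of $\Phi_{\alpha_0}$. The map $(\alpha,z)\mapsto\Phi_\alpha(z)$ is jointly continuous and $[0,1]$ is compact. Hence the argmax correspondence has closed graph, i.e.\ limits of maximizers are maximizers. Taking $\alpha\uparrow\alpha_0$ gives the claim.

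Next, suppose $\zeta$ is continuous at $\alpha_0$, or more generally $\zeta(\alpha_0)=z$. Then $z$ is a maximizer of $\Phi_{\alpha_0}$. Lemma~\ref{lem:maximizers} gives $\alpha_0P'(z)=-\log(1-z)$, so the integrand $-\log(1-z)/\sigma(z)$ equals $P'(z)$.

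The remaining levels are handled as follows.
\begin{itemize}
\item Levels $z$ lying strictly inside a jump interval $(a,b)$ of $\zeta$ at some $\alpha_0$, where $a=\zeta(\alpha_0^-)$ and $b=\zeta(\alpha_0)$; these are treated in Step 2.
\item Levels where $\sigma$ itself jumps, i.e.\ flat pieces of $\zeta$. Since $\zeta$ is monotone, there are at most countably many such levels, and they are Lebesgue-null.
\end{itemize}
Since $\zeta$ is monotone, it has at most countably many jumps. So $(0,1)$ is, up to a null set, the disjoint union of the good set and countably many jump intervals.

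\emph{Step 2 (across a jump).} On a jump interval $(a,b)$ we have $\sigma\equiv\alpha_0$. It therefore suffices to prove
\[
\int_a^b -\log(1-z)\,dz=\alpha_0\bigl(P(b)-P(a)\bigr).
\]
Both $a$ and $b$ maximize $\Phi_{\alpha_0}$, by Step 1, so $\Phi_{\alpha_0}(a)=\Phi_{\alpha_0}(b)$.

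By Lemma~\ref{lem:maximizers}, each endpoint $z\in\{a,b\}$ satisfies $e^{-\alpha_0P'(z)}=1-z$. Substituting this into $\Phi_{\alpha_0}(a)=\Phi_{\alpha_0}(b)$ and expanding $H$ yields
\[
\alpha_0\bigl(P(b)-P(a)\bigr)=b-a-\alpha_0\bigl((1-b)P'(b)-(1-a)P'(a)\bigr).
\]
On the other hand, the antiderivative $F(z)=(1-z)\log(1-z)+z$ of $-\log(1-z)$ gives
\[
F(b)-F(a)=b-a+(1-b)\log(1-b)-(1-a)\log(1-a).
\]
Using $\log(1-z)=-\alpha_0P'(z)$ at $z=a,b$, this is the same expression. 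So the jump interval contributes exactly $\int_a^bP'(z)\,dz$.

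\emph{Step 3 (summation).} Summing the pointwise identity from Step 1 over the good set and the interval identities from Step 2 over the countably many jumps, we get
\[
\int_0^1\frac{-\log(1-z)}{\sigma(z)}\,dz=\int_0^1P'(z)\,dz=1.
\]
Measurability of the integrand is clear, since $\sigma$ is monotone.

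The main obstacle is Step 1's claim that the left limit $\zeta(\alpha_0^-)$ is a maximizer of $\Phi_{\alpha_0}$. This is what makes both jump endpoints maximizers, and it must be combined with the stationarity condition of Lemma~\ref{lem:maximizers}. That lemma also covers the endpoint $z=0$, which is only relevant as $\alpha\downarrow0$. Care is also needed to confirm that levels where $\zeta$ is flat contribute nothing, since they form a null set.
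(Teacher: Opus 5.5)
Your proposal is correct and is essentially the paper's proof: the same split into attained levels and jump intervals, the same limit argument that the left limit $\zeta(\alpha_0^-)$ also maximizes $\Phi_{\alpha_0}$, and the same endpoint computation, which the paper packages as $\Psi=F-\alpha_0P$ with $\Psi(z^-)=\Psi(z^+)$. Two harmless slips: the leftover levels (outside the good set and the open jump intervals) are the unattained left endpoints of jump intervals, a countable set, whereas flat levels of $\zeta$ are attained and hence already good; and, contrary to your closing remark, $a=0$ can be the lower endpoint of a jump at some $\alpha_0>0$ when $p_1=0$ (e.g.\ $P(z)=z^3$), a case Lemma~\ref{lem:maximizers} still covers.
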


The proof of Lemma~\ref{lem:normalization} is given in Appendix~\ref{app:integral}. We can now prove Proposition~\ref{prop:integral}.

\begin{proof}[Proof of Proposition~\ref{prop:integral}]
If $p_1=1$, i.e., $P(z)=z$, then $\cU_P(\alpha)=e^{-\alpha}$, and 
\[
\int_0^\infty\cU_P(\alpha)\,d\alpha = \int_0^\infty e^{-\alpha}d\alpha = 1 > \frac{\pi}{4}.
\]
Now assume $p_1<1$. We first express the integral of $\cU_P$ through $\sigma$. For every $x\in[0,1]$, the indicator $\mathbf{1}\{x<z\}$ equals one exactly for $z\in(x,1]$, so
\[
 \int_0^1\mathbf{1}\{x<z\}\,dz=1-x.
\]
We apply this with $x=\zeta(\alpha)$ and use~\eqref{eq:zeta}. Since $\zeta$ is nondecreasing and right-continuous by Lemma~\ref{lem:zeta}, we have $\{\alpha>0:\zeta(\alpha)<z\}=(0,\sigma(z))$ for every $z\in(0,1)$. Hence, by Tonelli's theorem,
\begin{equation}\label{eq:firstpassage}
\int_0^\infty\cU_P(\alpha)\,d\alpha
 =\int_0^1\int_0^\infty\mathbf{1}\{\zeta(\alpha)<z\}\,d\alpha\,dz
 =\int_0^1\sigma(z)\,dz.
\end{equation}

If $\int_0^1\sigma(z)\,dz=\infty$, then~\eqref{eq:integral} holds by~\eqref{eq:firstpassage}. Otherwise, we write
\[
 \sqrt{-\log(1-z)}=\sqrt{\sigma(z)}\cdot\sqrt{-\log(1-z)/\sigma(z)}
\]
and apply the Cauchy--Schwarz inequality together with
Lemma~\ref{lem:normalization}:
\begin{align*}
\Bigl(\int_0^1\hspace{-1.5ex}\sqrt{-\log(1\hspace{-0.5ex}-\hspace{-0.5ex}z)}\,dz\Bigr)^2
&\hspace{-1ex}\leq\hspace{-0.5ex}\Bigl(\int_0^1\hspace{-1ex}\sigma(z)\,dz\Bigr)
       \Bigl(\int_0^1\frac{-\log(1\hspace{-0.5ex}-\hspace{-0.5ex}z)}{\sigma(z)}\,dz\Bigr)\\
 &=\int_0^1\sigma(z)\,dz.
\end{align*}
Finally, the substitution $t=-\log(1-z)$, under which ${dz=e^{-t}\,dt}$ and
$z\in(0,1)$ corresponds to $t\in(0,\infty)$, gives
\[
 \int_0^1\sqrt{-\log(1-z)}\,dz=\int_0^\infty\sqrt{t}\,e^{-t}\,dt
 =\Gamma\bigl(\tfrac32\bigr)=\frac{\sqrt\pi}{2},
\]
where $\Gamma(s)=\int_0^\infty t^{s-1}e^{-t}\,dt$ is the Gamma function.
Hence, by~\eqref{eq:firstpassage},
$\int_0^\infty\cU_P(\alpha)\,d\alpha=\int_0^1\sigma(z)\,dz\geq\pi/4$, which
proves~\eqref{eq:integral}.
\end{proof}

\begin{proof}[Proof of Theorem~\ref{thm:main}]
We first prove the theorem when the degree distribution does not depend on $k$, and then reduce the general case to this one.

Let $P_k=P$ for all $k$. By~\eqref{eq:area},
\[
\frac{\E[\tau_i]}{k}=\int_0^\infty U_k(\alpha)\,d\alpha.
\]
Since $U_k\geq0$, Fatou's
lemma, Lemma~\ref{lem:survival}, and Proposition~\ref{prop:integral} give
\[
\liminf_{k\to\infty}\frac{\E[\tau_i]}{k}
\geq\int_0^\infty\liminf_{k\to\infty}U_k(\alpha)\,d\alpha \geq\int_0^\infty\cU_P(\alpha)\,d\alpha\geq\frac{\pi}{4}.
\]

Now let $(P_k)$ be any sequence of degree distributions with column weights at most $W$, and let $L=\liminf_{k\to\infty}\E[\tau_i]/k$. As $L=\infty$ is trivial, we assume $L<\infty$. Choose indices $k_1<k_2<\cdots$ with $\E[\tau_i]/k_j\to L$. Each $P_k$ is a vector in $[0,1]^W$, so by the Bolzano--Weierstrass theorem, a further subsequence satisfies $P_k\to P$ coordinatewise, where $P$ is again a probability distribution on $[W]$. Along this further subsequence, $\E[\tau_i]/k$ still converges to $L$. Hence, it suffices to show that
$\liminf U_k(\alpha)\geq\cU_P(\alpha)$ along it for every $\alpha\geq0$, since Fatou's lemma and Proposition~\ref{prop:integral} then give $L\geq\pi/4$ as in the case of a fixed $P$.

For a degree distribution $Q$, write $R_k^{(Q)}(\alpha,\beta)$ for $R_k(\alpha,\beta)$ computed with degree distribution $Q$. We couple the augmented matrices for $P_k$ and $P$ as follows. The weights of each pair of corresponding columns are drawn so that they differ with probability $d_{\mathrm{TV}}(P_k,P)$, which is possible by the definition of the total variation distance. Whenever the weights agree, the two columns have the same support and the same entries, and the two matrices have the same singleton columns. Then the two augmented matrices differ in at most $\lfloor\alpha k\rfloor$ columns, each with probability $d_{\mathrm{TV}}(P_k,P)$. Since replacing one column changes the rank by at most one,
\begin{align*}
 \bigl|R_k^{(P_k)}(\alpha,\beta)-R_k^{(P)}(\alpha,\beta)\bigr|
 &\leq\frac{\lfloor\alpha k\rfloor}{k}\,d_{\mathrm{TV}}(P_k,P)\\ &\leq\alpha\,d_{\mathrm{TV}}(P_k,P)\longrightarrow0.
\end{align*}
Together with Lemma~\ref{lem:ranklimit}, applied to the fixed distribution $P$, this gives $R_k^{(P_k)}(\alpha,\beta)\to R_P(\alpha,\beta)$ along the
subsequence, for all $\alpha,\beta\geq0$. Lemma~\ref{lem:finite} holds for
every $k$ and every degree distribution, and the proof of
Lemma~\ref{lem:survival} uses only Lemma~\ref{lem:finite} and this
convergence. Hence, $\liminf U_k(\alpha)\geq\cU_P(\alpha)$ along the
subsequence, which completes the proof.
\end{proof}

The last Cauchy--Schwarz step in the proof of Proposition~\ref{prop:integral} is the same integral estimate used
in~\cite[Lemma~5]{hby}. Here, the additional identity~\eqref{eq:jump} ensures
that~\eqref{eq:normalization} remains valid across levels skipped by jumps.

\section{A Singleton--Pair Code Construction}\label{sec:pairs}
Theorem~\ref{thm:main} shows that no ensemble with bounded column weights has normalized expected recovery time asymptotically below $\pi/4$. In this section, we show that this bound is nearly attained by explicit generator matrices. We consider the degree distribution with $p_1=s$ and $p_2=1-s$ for some $0<s<1$, that is, $P(z)=sz+(1-s)z^2$, so that every column has one or two nonzero entries. For the peeling decoder and binary matrices, this distribution was analyzed in~\cite{hby}, where the best choice of $s$ gives a normalized expected recovery time of approximately $0.7939$. Here, we extend this result in two directions. First, we show that an optimal decoder achieves the same value, over every field and for every coefficient distribution, so that optimal decoding gives no first-order improvement for this distribution. Second, we construct explicit generator matrices over every fixed finite field that achieve this value, including a block construction of positive rate.

\subsection{Asymptotic analysis}
Throughout this subsection, $P(z)=sz+(1-s)z^2$ for a fixed $0<s<1$, and the field $\F$, as in Section~\ref{sec:model}, and the coefficient distribution $\mu$ are arbitrary. To analyze the recovery time, we represent the supports of the columns by a graph $\mathcal{G}$ with vertex set $[k]$. Every column with two nonzero entries, in positions $u$ and $v$, is an edge between $u$ and $v$, and columns with the same support give parallel edges. Every column with one nonzero entry, in position $u$, \emph{marks} the vertex $u$. Marks do not affect connectivity, and a vertex without edges forms a component by
itself.

Columns supported in different components of $\mathcal{G}$ have disjoint supports. Hence, the column span is the direct sum of the spans of the columns supported in the individual components, and whether $x_i$ can be recovered depends only on the columns supported in the component of $i$, including the values of their nonzero entries. The next lemma shows that in two cases this is determined by $\mathcal{G}$ alone. In the remaining case of an unmarked component containing a cycle, recovery depends on the coefficients. For example, the columns $(1,2,0,\ldots,0)^{\mathsf T}$ and $(1,1,0,\ldots,0)^{\mathsf T}$ form a double edge and span $\bfe_1$ and $\bfe_2$, whereas two copies of $(1,1,0,\ldots,0)^{\mathsf T}$ span neither. We show below that this case has vanishing probability.

\begin{lemma}\label{lem:components}
Let $\mathcal{H}$ be a component of $\mathcal{G}$.
\begin{enumerate}
\item [(i)]If $\mathcal{H}$ contains a marked vertex, then $x_i$ can be recovered for every $i\in\mathcal{H}$.
\item[(ii)] If $\mathcal{H}$ is an unmarked tree, then $x_i$ cannot be recovered for any $i\in\mathcal{H}$.
\end{enumerate}
\end{lemma}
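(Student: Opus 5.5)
Both parts reduce, via the direct-sum decomposition noted before the lemma, to linear algebra inside the span $V_{\mathcal{H}}$ of the columns supported in $\mathcal{H}$. Recall that $x_i$ is recoverable if and only if $\bfe_i\in\spn{M_k(n)}$. Because the span is the direct sum of the spans over the components, for $i\in\mathcal{H}$ this is equivalent to $\bfe_i\in V_{\mathcal{H}}$.

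\emph{Part (i).} Let $u\in\mathcal{H}$ be marked, so some column equals $c\,\bfe_u$ with $c\in\F^{*}$, and hence $\bfe_u\in V_{\mathcal{H}}$. We propagate along edges by induction on the graph distance from $u$ in $\mathcal{H}$. Suppose $\bfe_v\in V_{\mathcal{H}}$ and there is an edge between $v$ and $w$. This edge is a column $a\bfe_v+b\bfe_w$ with $a,b\neq0$. Then
\[
 \bfe_w=b^{-1}\bigl((a\bfe_v+b\bfe_w)-a\bfe_v\bigr)\in V_{\mathcal{H}}.
\]
Since $\mathcal{H}$ is connected, every vertex is reached, so $\bfe_i\in V_{\mathcal{H}}$ for all $i\in\mathcal{H}$.

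\emph{Part (ii).} Let $\mathcal{H}$ be an unmarked tree with $m$ vertices. It has $m-1$ edges, and since it is a tree there are no parallel edges. We exhibit a vector $\mathbf{z}$, supported on $\mathcal{H}$, with all entries on $\mathcal{H}$ nonzero and $\mathbf{z}^{\mathsf T}\bfv=0$ for every column $\bfv$. Columns outside $\mathcal{H}$ have disjoint support from $\mathbf{z}$, so they are automatically orthogonal to it. The columns inside $\mathcal{H}$ are exactly the edge columns, since there are no marks.

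To build $\mathbf{z}$, root the tree at any vertex $r$ and set $z_r=1$. Then proceed outward. For an edge column $a\bfe_v+b\bfe_w$ in which $v$ is the parent of $w$, set $z_w=-a z_v/b$. This value is nonzero by induction, and it is well defined because each non-root vertex has a unique parent edge. Every edge is used exactly once, so every edge column satisfies $a z_v+b z_w=0$.

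Given $\mathbf{x}$, the vectors $\mathbf{x}$ and $\mathbf{x}+\mathbf{z}$ produce the same observations but differ in coordinate $i$. Equivalently, as in Section~\ref{sec:model}, $\bfe_i\notin\spn{M_k(n)}$ for each $i\in\mathcal{H}$.

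An alternative argument for (ii) is a dimension count: $\dim V_{\mathcal{H}}\leq m-1$. However, a dimension deficit alone does not rule out $\bfe_i\in V_{\mathcal{H}}$ for a particular $i$, so the explicit nowhere-zero annihilator $\mathbf{z}$ is the cleaner tool.

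The only delicate point is to check that an edge column genuinely has both coefficients nonzero, which holds because the support has exactly two elements. I expect no real obstacle beyond this careful bookkeeping.
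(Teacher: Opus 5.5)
Your proof is correct and follows essentially the same route as the paper. In (i) you propagate $\bfe_u$ along edges via $\bfe_w=b^{-1}((a\bfe_v+b\bfe_w)-a\bfe_v)$. In (ii) you build the same nowhere-zero annihilator on $\mathcal{H}$ by rooting the tree and setting $z_w=-az_v/b$ outward. Your explicit observation that a tree has no parallel edges, so every column supported in $\mathcal{H}$ is a unique parent edge, makes precise a point the paper uses only implicitly.
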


\begin{proof}
For (i), let $u\in\mathcal{H}$ be marked. A weight-one column at $u$ is a nonzero multiple of $\bfe_u$, so $\bfe_u$ lies in the column span. If $\bfe_w$ lies in the span and an edge between $w$ and $v$ corresponds to the column $a\bfe_w+b\bfe_v$ with $a,b\neq0$, then $\bfe_v=b^{-1}(a\bfe_w+b\bfe_v-a\bfe_w)$ also lies in the span. Since $\mathcal{H}$ is connected, repeating this along paths from $u$ shows that $\bfe_v$ lies in the span for every $v\in\mathcal{H}$.

For (ii), it suffices to find $\mathbf{y}\in\F^k$ with $\mathbf{y}^{\mathsf T}\bfv_j=0$ for every observed column $\bfv_j$ and $y_i\neq0$ for every $i\in\mathcal{H}$. Indeed, every vector $\mathbf{u}$ in the column span then satisfies $\mathbf{y}^{\mathsf T}\mathbf{u}=0$, whereas $\mathbf{y}^{\mathsf T}\bfe_i=y_i\neq0$, so $\bfe_i$ is not in the span. 

Choose a root $r\in\mathcal{H}$, set $y_r=1$, and set $y_v=0$ for $v\notin\mathcal{H}$. For the remaining vertices $v\in\mathcal{H}$, in order of increasing distance from $r$, let $w$ be the parent of $v$, let $a\bfe_w+b\bfe_v$ be the column of the edge between them, and set $y_v=-ay_w/b$. Since $a,b\neq0$, all $y_v$ with $v\in\mathcal{H}$ are nonzero. We check that $\mathbf{y}^{\mathsf T}\bfv_j=0$ for every column $\bfv_j$. If the support of $\bfv_j$ is disjoint from $\mathcal{H}$, this holds since $\mathbf{y}$ vanishes outside $\mathcal{H}$. Otherwise, $\bfv_j$ is supported in $\mathcal{H}$. Since $\mathcal{H}$ is unmarked, $\bfv_j$ is not a
weight-one column, so it corresponds to an edge of $\mathcal{H}$, that is,
$\bfv_j=a\bfe_w+b\bfe_v$, where $w$ is the parent of $v$. Then
\[
 \mathbf{y}^{\mathsf T}\bfv_j=ay_w+by_v=ay_w-ay_w=0.
\]
\end{proof}

\begin{proposition}\label{prop:pairs}
Let $P(z)=sz+(1-s)z^2$ with $0<s<1$. For every field $\F$ as in Section~\ref{sec:model} and every distribution $\mu$ on~$\F^{*}$,
\begin{equation}\label{eq:pairs}
\lim_{k\to\infty}\frac{\E[\tau_i]}{k}
 =\int_0^1\frac{-\log u}{s+2(1-s)(1-u)}\,du.
\end{equation}
\end{proposition}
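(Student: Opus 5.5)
The plan is to compute $\lim_k U_k(\alpha)$ pointwise from the graph $\mathcal{G}$ and then integrate, using~\eqref{eq:area} and dominated convergence. First I Poissonize exactly as in the proof of Lemma~\ref{lem:ranklimit}. Replacing $\lfloor\alpha k\rfloor$ by a $\Pois(\alpha k)$ number of columns changes the non-recovery probability by $O(k^{-1/2})$, since the recovery event is monotone in the set of columns and the column count concentrates. By the splitting property, the marks at the vertices are then independent $\Pois(s\alpha)$ variables. The edges form an independent Poisson multigraph whose total edge count is $\Pois((1-s)\alpha k)$, with uniformly random endpoint pairs. Its mean degree is $c=2(1-s)\alpha$. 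Crucially, the marks are independent of the edge structure.

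\emph{Lower bound.} By Lemma~\ref{lem:components}(ii), $U_k(\alpha)\geq\Prob\{\text{the component of $i$ is an unmarked tree}\}$. Standard local weak convergence of sparse random multigraphs says that the component of $i$, explored from $i$, converges in distribution to a Galton--Watson tree with $\Pois(c)$ offspring. Consequently, for each fixed finite tree shape the probabilities converge, and bounded-size components contain a cycle with probability $o(1)$. Each vertex is independently unmarked with probability $e^{-s\alpha}$. Hence the limiting probability that the component is a finite unmarked tree is the smallest root $q=q(\alpha)\in(0,1)$ of
\[
 q=e^{-s\alpha}\exp\bigl(c(q-1)\bigr)=\exp\bigl(-s\alpha-2(1-s)\alpha(1-q)\bigr).
\]
By Fatou applied to the finite-size truncations, $\liminf_k U_k(\alpha)\geq q(\alpha)$. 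In fact this root is unique in $(0,1)$: the right-hand side is convex in $q$ and is less than $1$ at $q=1$.

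\emph{Upper bound.} If $x_i$ is not recovered, Lemma~\ref{lem:components}(i) implies that the component of $i$ is unmarked. I split according to its size $m$, using a truncation level $K$.
\begin{itemize}
\item If $m\leq K$, the component is either an unmarked tree, whose probability tends to the truncated branching-process value, or contains a cycle, which has probability $o(1)$.
\item If $m>K$, the independence of marks from edges gives probability at most $e^{-s\alpha K}$.
\end{itemize}
Letting $k\to\infty$ and then $K\to\infty$ yields $\limsup_k U_k(\alpha)\leq q(\alpha)$. I expect the main technical obstacle to be justifying this local convergence and the vanishing probability of small cyclic components for the Poissonized multigraph. The remaining steps reduce to it.

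\emph{Integration.} For domination, note that non-recovery requires $i$ to be unmarked, so
\[
 U_k(\alpha)\leq(1-s/k)^{\lfloor\alpha k\rfloor}\leq e^{s}e^{-s\alpha},
\]
which is integrable. Thus $\E[\tau_i]/k\to\int_0^\infty q(\alpha)\,d\alpha$. Solving the fixed-point equation for $\alpha$ with $u=q$ gives
\[
 \alpha=\frac{-\log u}{s+2(1-s)(1-u)},
\]
a decreasing bijection from $u\in(0,1)$ onto $\alpha\in(0,\infty)$. So $q$ is a continuous decreasing function from $1$ to $0$. By the layer-cake argument used in~\eqref{eq:firstpassage}, $\int_0^\infty q(\alpha)\,d\alpha=\int_0^1\alpha(u)\,du$, which is exactly~\eqref{eq:pairs}. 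The limit is independent of $\F$ and $\mu$ because only the events in Lemma~\ref{lem:components} mattered. Since peeling also recovers precisely the marked components, the same argument shows that peeling has the same limit.
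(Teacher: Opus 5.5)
Your argument is correct. Its upper bound is essentially the paper's: after Poissonization the component of $i$ must be unmarked, marks are independent of edges, you truncate at a fixed component size, and the limit is a Borel-type sum. The paper handles that sum with a union bound over spanning trees via Cayley's formula and identifies it with $u_s(\alpha)$ by Lagrange inversion; you use the branching-process fixed point instead.

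Your lower bound takes a genuinely different route. The paper does not use $\mathcal{G}$ there at all. It invokes the general rank-based bound of Lemma~\ref{lem:survival}, which rests on the rank formula of Coja-Oghlan et al.\ and the comparison in the Appendix, and it checks that $\Phi_\alpha$ has the unique maximizer $1-u_s(\alpha)$, so that $\cU_P(\alpha)=u_s(\alpha)$. You instead use Lemma~\ref{lem:components}(ii) and compute the probability that the component of $i$ is an unmarked tree.

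Your route is elementary and self-contained, and it does not even need local weak convergence. The events that the component of $i$ is exactly a given tree on a given vertex set are disjoint, so the paper's Cayley count becomes an exact computation, with limit $\frac{r^{r-1}}{r!}c^{r-1}e^{-cr}$ for $c=2(1-s)\alpha$. The paper's route, in exchange, shows that the general bound $\cU_P$ of Section~\ref{sec:rank} is tight for this family.

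One step should be repaired. Monotonicity and concentration of the Poisson column count do not by themselves give an $O(k^{-1/2})$ de-Poissonization error. Writing $\widetilde U_k$ for the Poissonized non-recovery probability, they give
\[
\widetilde U_k(\alpha+\varepsilon)-o(1)\le U_k(\alpha)\le\widetilde U_k(\alpha-\varepsilon)+o(1).
\]
This suffices, because your limit $q$ is continuous. Alternatively, you can avoid de-Poissonization altogether, as the paper does, by using the exact identity $\int_0^\infty\widetilde U_k(\alpha)\,d\alpha=\E[\tau_i]/k$ together with the domination $\widetilde U_k(\alpha)\le e^{-s\alpha}$.
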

\begin{proof}
Let $\alpha>0$, and let $z$ be a maximizer of $\Phi_\alpha$. By Lemma~\ref{lem:maximizers}, $z<1$ and $1-z=e^{-\alpha P'(z)}$. Since
\[
P'(z)=s+2(1-s)z,
\]
the number $u=1-z$ therefore satisfies
\begin{equation}\label{eq:pairfixed}
 u=\exp\bigl(-\alpha P'(1-u)\bigr) = \exp\bigl(-\alpha(s+2(1-s)(1-u))\bigr)
\end{equation}
Moreover, $z\neq0$, since $z=0$ would give $1=e^{-\alpha P'(0)}=e^{-\alpha s}<1$, and  hence, $u\in(0,1)$. 

Taking logarithms, a number $u\in(0,1)$ satisfies~\eqref{eq:pairfixed} if and only if $\log u+\alpha(s+2(1-s)(1-u))=0$. The left side of this equation is a strictly concave function of $u$, since $\log u$ is strictly concave and the remaining term is linear in $u$. It tends to $-\infty$ as $u\downarrow0$ and equals $\alpha s>0$ at $u=1$. By the intermediate value theorem, it therefore vanishes at some $u\in(0,1)$. It vanishes at only one such $u$, since a strictly concave function that is positive at $u=1$ cannot vanish at two points of $(0,1)$. Hence,~\eqref{eq:pairfixed} has exactly one solution in~$(0,1)$, which we denote by $u_s(\alpha)$. Since every maximizer $z$ of $\Phi_\alpha$ gives a solution $u=1-z$ of~\eqref{eq:pairfixed} in $(0,1)$, the function $\Phi_\alpha$ has the unique maximizer $1-u_s(\alpha)$. In particular, $\zeta(\alpha)=1-u_s(\alpha)$, and Lemma~\ref{lem:maximizers} gives $\cU_P(\alpha)=1-\zeta(\alpha)=u_s(\alpha)$.

By~\eqref{eq:area}, $\E[\tau_i]/k=\int_0^\infty U_k(\alpha)\,d\alpha$. Since
$U_k\geq0$, Fatou's lemma and Lemma~\ref{lem:survival} give
\[
 \liminf_{k\to\infty}\frac{\E[\tau_i]}{k}
 \geq\int_0^\infty\liminf_{k\to\infty}U_k(\alpha)\,d\alpha
 \geq\int_0^\infty u_s(\alpha)\,d\alpha.
\]

For the upper bound, we use a different representation of the expectation. For a fixed number of observations, the numbers of columns supported on different coordinates or pairs of coordinates are dependent, since they sum to the number of observations. To make them independent, we consider a Poisson number of observations. For $\alpha>0$, let $N_\alpha$ be a $\Pois(\alpha k)$ random variable, independent of the columns, and let $\widetilde U_k(\alpha)$ be the probability that $x_i$ cannot be recovered from the first $N_\alpha$ observations, that is,
$\widetilde U_k(\alpha)=\sum_{n\geq0}\Prob\{N_\alpha=n\}\Prob\{\tau_i>n\}$.
Integrating over $\alpha$, we obtain
\begin{align*}
 \int_0^\infty\widetilde U_k(\alpha)\,d\alpha
 &=\sum_{n\geq0}\Prob\{\tau_i>n\}\int_0^\infty\Prob\{N_\alpha=n\}\,d\alpha\\
 &=\sum_{n\geq0}\Prob\{\tau_i>n\}\int_0^\infty e^{-\alpha k}
   \frac{(\alpha k)^n}{n!}\,d\alpha\\
 &=\frac1k\sum_{n\geq0}\Prob\{\tau_i>n\}=\frac{\E[\tau_i]}{k}.
\end{align*}
In the first step, we exchanged the sum and the integral, which is allowed
since all terms are nonnegative, and used that $\Prob\{\tau_i>n\}$ does not
depend on $\alpha$. In the third step, we substituted $x=\alpha k$ and used
$\int_0^\infty x^ne^{-x}\,dx=n!$. The last step is the tail-sum formula.

Fix $\alpha>0$, and consider the graph $\mathcal{G}$ formed by the observations made by time $\alpha$. Each observed column has one of $k+\binom k2$ types: a weight-one column at coordinate $u$, which has probability $s/k$, or a weight-two column supported on the pair $\{u,v\}$, which has probability $(1-s)/\binom k2$. Since the number of observed columns is $\Pois(\alpha k)$, the splitting property of the Poisson distribution implies that the numbers of columns of the different types are independent, with distributions $\Pois(s\alpha)$ and $\Pois(2(1-s)\alpha/(k-1))$, respectively. Consequently, every vertex is marked independently with probability $1-e^{-s\alpha}$, every pair of vertices is joined by at least one edge independently with probability $1-e^{-2(1-s)\alpha/(k-1)}$, and the marks are independent of the edges.

By Lemma~\ref{lem:components}(i), if $x_i$ cannot be recovered, then the component $\mathcal{H}_i$ of $i$ in $\mathcal{G}$ is unmarked. Given the edges, $\mathcal{H}_i$ is fixed, and each of its vertices is unmarked independently with probability $e^{-s\alpha}$. Hence, for every integer $L\geq1$,
\begin{align*}
 \widetilde U_k(\alpha)
 &\leq\E\bigl[e^{-s\alpha|\mathcal{H}_i|}\bigr]\\
 &=\sum_{r=1}^{L}e^{-s\alpha r}\Prob\{|\mathcal{H}_i|=r\}
   +\sum_{r>L}e^{-s\alpha r}\Prob\{|\mathcal{H}_i|=r\}\\
 &\leq\sum_{r=1}^{L}e^{-s\alpha r}\Prob\{|\mathcal{H}_i|=r\}
   +e^{-s\alpha(L+1)}\Prob\{|\mathcal{H}_i|>L\}\\
 &\leq\sum_{r=1}^{L}e^{-s\alpha r}\Prob\{|\mathcal{H}_i|=r\}
   +e^{-s\alpha(L+1)}.
\end{align*}
If $|\mathcal{H}_i|=r$, then for some size-$r$ set of vertices $S$, such that $i\in S$, the edges within $S$ contain a spanning tree of $S$ and no edge joins $S$ to the other vertices. There are $\binom{k-1}{r-1}$ such sets and, by Cayley's formula, $r^{r-2}$ spanning trees of each. The edges of a tree and the excluded edges lie on disjoint pairs of vertices and are therefore independent, so a union
bound gives
\begin{align*}
 \Prob\{|\mathcal{H}_i|=r\}
&\hspace{-0.5ex}\leq\hspace{-0.5ex}\binom{k\hspace{-0.5ex}-\hspace{-0.5ex}1}{r\hspace{-0.5ex}-\hspace{-0.5ex}1}r^{r-2}\Bigl(1\hspace{-0.5ex}-\hspace{-0.5ex}e^{-\frac{2(1-s)\alpha}{k-1}}\Bigr)^{r-1}
   \hspace{-0.6ex}e^{-\frac{2(1-s)\alpha r(k-r)}{k-1}}\\
&\longrightarrow\frac{r^{r-1}}{r!}\bigl(2(1-s)\alpha\bigr)^{r-1}
   e^{-2(1-s)\alpha r}
\end{align*}
as $k\to\infty$. Letting $k\to\infty$ and then $L\to\infty$, we obtain
$\limsup_{k\to\infty}\widetilde U_k(\alpha)\leq v$, where
\[
 v=\sum_{r\geq1}\frac{r^{r-1}}{r!}\bigl(2(1-s)\alpha\bigr)^{r-1}
 e^{-(2-s)\alpha r}.
\]

It remains to show that $v=u_s(\alpha)$. The power series $y(w)=\sum_{r\geq1}r^{r-1}w^r/r!$ satisfies $y(w)=we^{y(w)}$ and
$0\leq y(w)<1$ for $0\leq w<1/e$, by Lagrange inversion. Since
$2(1-s)\alpha\,v=y(w)$ with $w=2(1-s)\alpha e^{-(2-s)\alpha}<1/e$, this gives
$v=\exp(-\alpha(s+2(1-s)(1-v)))$, that is, $v$
satisfies~\eqref{eq:pairfixed}. Moreover, $0<v<1$, since $y(w)<1$ and
$y(w)e^{-y(w)}=w<2(1-s)\alpha e^{-2(1-s)\alpha}$. By uniqueness,
$v=u_s(\alpha)$. Since $\widetilde U_k(\alpha)\leq e^{-s\alpha}$, Fatou's
lemma applied to $e^{-s\alpha}-\widetilde U_k(\alpha)$ gives
$\limsup_{k\to\infty}\E[\tau_i]/k\leq\int_0^\infty u_s(\alpha)\,d\alpha$,
which matches the lower bound.

Finally, solving~\eqref{eq:pairfixed} for $\alpha$ gives the strictly
decreasing inverse $\alpha(u)=-\log u/(s+2(1-s)(1-u))$ on $(0,1)$, with
$\alpha(u)\to\infty$ as $u\downarrow0$ and $\alpha(u)\to0$ as $u\uparrow1$.
Hence, $\int_0^\infty u_s(\alpha)\,d\alpha=\int_0^1\alpha(u)\,du$, which is
the right side of~\eqref{eq:pairs}.
\end{proof}

The right side of~\eqref{eq:pairs} is minimized at $s\approx0.1554737$, where numerical integration gives the value $0.793933445$. This agrees with the optimal value for this distribution under the peeling decoder reported in~\cite{hby}. In fact, the peeling decoder recovers exactly the coordinates in marked components, so the upper bound in the proof of Proposition~\ref{prop:pairs} also applies to it. Hence, for this distribution, an optimal decoder has the same asymptotic normalized expected recovery time as the peeling decoder, for every $s$, over every field, and for every coefficient distribution.

\subsection{Finite-field generator matrices}\label{sec:construction}
The degree distribution $P(z)=sz+(1-s)z^2$ of the previous subsection can be realized by sampling a fixed generator matrix. Fix a prime power $q$ and an integer $a\geq1$. Over $\mathbb F_q$, let
\begin{equation}\label{eq:generator}
 G_{k,q}(a)=\bigl[\underbrace{I_k\ \cdots\ I_k}_{a\text{ copies}}\ \big|\ D_{k,q}\bigr],
\end{equation}
where $I_k$ is the $k\times k$ identity matrix and $D_{k,q}$ contains the column $\bfe_u+\lambda\bfe_v$ once for every $1\leq u<v\leq k$ and every $\lambda\in\mathbb F_q^*$. This is a
dimension-$k$ code of length
\[
 ak+(q-1)\binom{k}{2}.
\]
Uniform sampling of its columns with replacement  produces a weight-one column with probability 
\[
s_k=\frac{ak}{ak+(q-1)\binom{k}{2}}.\]
Otherwise, it produces a column $\bfe_u+\lambda\bfe_v$, where $\{u,v\}$ is a uniformly random pair and $\lambda$ is uniform on $\mathbb F_q^*$, independently. Unlike the random columns of Section~\ref{sec:model}, the matrix is fixed before sampling, so no new coefficients are generated during retrieval. For any prescribed $0<s<1$, we choose the number of copies of $I_k$ in
$G_{k,q}(a)$ as
\begin{equation}\label{eq:multiplicity}
a=a_k:=\left\lceil\frac{s(q-1)(k-1)}{2(1-s)}\right\rceil,
\end{equation}
which gives $s_k\to s$ as $k\to\infty$. 

Hence, for $q=2$, sampling the columns of $G_{k,2}(a_k)$ is the setting analyzed in~\cite{hby} for the peeling decoder with column weights one and two. For larger $q$, the matrices $G_{k,q}(a)$ are closely related to the construction of~\cite{first}, which also combines copies of $I_k$ with weight-two columns, but uses $x$ weight-two columns per pair, with distinct coefficient ratios, over a field whose size grows with $k$. For that construction, it was conjectured in~\cite{hby} that, with the ratio of weight-one to weight-two columns optimized, the normalized expected recovery time tends to $\approx0.7939$, the minimal value obtained in~\cite{hby} for binary fully symmetric codes with column weights one and two under the peeling decoder. Proposition~\ref{prop:finitecode} below establishes this limit for $G_{k,q}(a_k)$ over every fixed finite field.

\begin{lemma}\label{lem:field}
Let $\bfv_1,\dots,\bfv_n$ be sampled from $G_{k,q}(a)$, and let $\mathcal{H}_i$ be the component of $i$ in $\mathcal{G}$. If $\mathcal{H}_i$ is marked, then $\bfe_i\in\langle M_k(n)\rangle$. If $\mathcal{H}_i$ is unmarked and has $m$ edges, counted with multiplicity, then
\begin{equation}\label{eq:field}
 \Prob\{\bfe_i\notin\langle M_k(n)\rangle\mid\text{supports}\}
 =(q-1)^{-(m-|\mathcal{H}_i|+1)}.
\end{equation}
\end{lemma}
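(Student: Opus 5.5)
The marked case is immediate from Lemma~\ref{lem:components}(i), whose proof uses only that weight-one columns are nonzero multiples of $\bfe_u$ and that weight-two columns have two nonzero entries. Both facts hold for columns sampled from $G_{k,q}(a)$. For the unmarked case, I would condition on the supports. Then the only randomness left is in the coefficients: each weight-two column has the form $\bfe_u+\lambda\bfe_v$ with $u<v$, and its $\lambda$ is uniform on $\mathbb F_q^*$, independently across sampled columns. As in the discussion before Lemma~\ref{lem:components}, the column span is a direct sum over components. So $\bfe_i$ lies in the span if and only if it lies in the span of the columns of $\mathcal H_i$. I would then use the duality argument from the proof of Lemma~\ref{lem:components}(ii): $\bfe_i\notin\langle M_k(n)\rangle$ if and only if some $\mathbf y$, supported on $\mathcal H_i$, is orthogonal to every edge column of $\mathcal H_i$ and has $y_i\neq0$.

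Next I would characterize this orthogonal space. Each edge $\bfe_u+\lambda\bfe_v$ imposes $y_u+\lambda y_v=0$, that is, $y_v=-\lambda^{-1}y_u$. Since $\mathcal H_i$ is connected, I would fix a spanning tree $T$ rooted at $i$. The tree constraints determine $\mathbf y$ uniquely from $y_i$, and every entry is a nonzero multiple of $y_i$. So the solution space is either $\{0\}$ or the one-dimensional space of this propagated vector. In the latter case every entry is nonzero, and $\bfe_i$ is not recoverable. The edges outside $T$ number $m-(|\mathcal H_i|-1)=m-|\mathcal H_i|+1$, counted with multiplicity. The propagated vector is a solution exactly when each of these non-tree edges is consistent, i.e., when its $\lambda$ equals one specific value $-y_u/y_v\in\mathbb F_q^*$, which is determined by the tree coefficients alone.

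Finally, I would compute the probability. Condition further on the coefficients of the tree edges. Each non-tree edge is a separate sampled column, so its $\lambda$ is uniform on $\mathbb F_q^*$ and independent of everything else. It therefore hits its required value with probability $1/(q-1)$, independently across non-tree edges. This includes parallel edges: a repeated sample of the same pair has an independent $\lambda$, and if that pair was sampled twice its two copies are different columns. Multiplying and averaging over the tree coefficients gives $(q-1)^{-(m-|\mathcal H_i|+1)}$, which is \eqref{eq:field}.

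The main point needing care is the orientation convention. The coefficient $1$ always sits on the smaller index, so along a tree edge the propagation is either $y_v=-\lambda^{-1}y_u$ or $y_u=-\lambda y_v$, depending on which endpoint is the parent. Either way, the consistency condition for a non-tree edge singles out exactly one value of its own $\lambda$, since $\lambda\mapsto$ the required ratio is a bijection on $\mathbb F_q^*$. I would also check the degenerate case $q=2$: there every consistency condition holds automatically, and the formula gives probability $1$, as it should.
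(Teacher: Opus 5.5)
Your proof is correct and follows the paper's argument. Both fix a spanning tree of $\mathcal{H}_i$, show that each of the $m-|\mathcal{H}_i|+1$ non-tree columns must take exactly one coefficient value for $\bfe_i$ to stay outside the span, and multiply the independent probabilities $1/(q-1)$. The only difference is that you work with the annihilating vector $\mathbf{y}$ propagated along the tree, while the paper works with the hyperplane $T$ spanned by the tree columns, which is exactly $\mathbf{y}^{\perp}$ within the coordinates of $\mathcal{H}_i$; your explicit condition $\lambda=-y_u/y_v$ thus replaces the paper's existence-and-uniqueness argument for the unique $\lambda$ with $\bfe_u+\lambda\bfe_v\in T$.
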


\begin{proof}
If $\mathcal{H}_i$ is marked, then $\bfe_i\in\langle M_k(n)\rangle$ by Lemma~\ref{lem:components}(i). Otherwise, let $r=|\mathcal{H}_i|$ and fix a spanning tree of $\mathcal{H}_i$. Its $r-1$ columns are linearly independent, so they span a subspace $T$ of dimension $r-1$ of the $r$-dimensional space of vectors supported in $\mathcal{H}_i$. Since the tree is unmarked, $T$ contains no weight-one vector by Lemma~\ref{lem:components}(ii).

Each of the other $m-r+1$ columns $\bfe_u+\lambda\bfe_v$ of $\mathcal{H}_i$ either lies in $T$, or extends $T$ to the whole $r$-dimensional space, which contains $\bfe_i$. Hence, $\bfe_i\notin\langle M_k(n)\rangle$ if and only if all of them lie in $T$. For given $u,v$, exactly one $\lambda\in\mathbb F_q^*$ puts $\bfe_u+\lambda\bfe_v$ in $T$. At least one does, since combining the tree columns along the path from $u$ to $v$ gives such a vector. At most one does, since otherwise subtracting two of them would put a weight-one vector in $T$. Since the $\lambda$'s are independent and uniform on $\mathbb F_q^{*}$, all of them take this value with probability $(q-1)^{-(m-r+1)}$.
\end{proof}

\begin{proposition}\label{prop:finitecode}
For every prime power $q$ and $0<s<1$, sampling the columns of
$G_{k,q}(a_k)$, with $a_k$ as in~\eqref{eq:multiplicity}, gives the same
limit as in~\eqref{eq:pairs}, that is,
\begin{equation}\label{eq:finitecode}
 \lim_{k\to\infty}\frac{\E[\tau_i]}{k}
 =\int_0^1\frac{-\log u}{s+2(1-s)(1-u)}\,du.
\end{equation}
\end{proposition}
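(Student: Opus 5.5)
We prove matching lower and upper bounds, as in the proof of Proposition~\ref{prop:pairs}. The two new features are that sampling from $G_{k,q}(a_k)$ has weight-one probability $s_k$ rather than the fixed $s$, and that the coefficients $\lambda$ are uniform on $\mathbb F_q^*$ while the leading coefficient equals $1$. For the lower bound, a sampled column is either a uniformly random weight-one column $\bfe_u$ or $\bfe_u+\lambda\bfe_v$ with $\{u,v\}$ uniform and $\lambda$ uniform. Up to a uniformly random order of the two support positions, this is the model of Section~\ref{sec:model} with degree distribution $P_k=(s_k,1-s_k)$. Concretely, we take $\mu$ uniform on $\mathbb F_q^*$ and rescale each column by the inverse of its first nonzero entry, which does not change the span. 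A weight-two column is then $\bfe_u+\lambda'\bfe_v$, where $\lambda'$ is the ratio of two independent uniforms, hence uniform. Since $P_k\to P$ with $P(z)=sz+(1-s)z^2$, the subsequence-and-coupling argument at the end of the proof of Theorem~\ref{thm:main} gives $\liminf_k U_k(\alpha)\geq\cU_P(\alpha)=u_s(\alpha)$ for each $\alpha$. Here $u_s(\alpha)$ is computed in the proof of Proposition~\ref{prop:pairs}. Fatou's lemma then gives $\liminf_k \E[\tau_i]/k\geq\int_0^\infty u_s(\alpha)\,d\alpha$, which equals the right side of~\eqref{eq:finitecode}.

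For the upper bound, we again Poissonize with $N_\alpha\sim\Pois(\alpha k)$, so that $\E[\tau_i]/k=\int_0^\infty\widetilde U_k(\alpha)\,d\alpha$. We cannot simply use $\widetilde U_k(\alpha)\le \E[e^{-s_k\alpha|\mathcal H_i|}]$, because with finitely many coefficient values, unmarked components containing cycles need not be recovered. Lemma~\ref{lem:field} handles this. Conditional on the supports, non-recovery has probability $\mathbf 1\{\mathcal H_i\text{ unmarked}\}(q-1)^{-(m-|\mathcal H_i|+1)}\le \mathbf 1\{\mathcal H_i\text{ unmarked}\}$. The bound therefore reduces to the purely graph-theoretic quantity already estimated in Proposition~\ref{prop:pairs}. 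Under Poisson sampling, vertices are marked independently with probability $1-e^{-s_k\alpha}$. Pairs receive independent $\Pois(2(1-s_k)\alpha/(k-1))$ numbers of edges, summed over $\lambda$, by the splitting property. The marks are independent of the edges.

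It remains to repeat the Cayley-formula estimate with $s_k$ in place of $s$. Each finite-$r$ term converges, because $s_k\to s$ enters continuously. The tail is bounded by $e^{-s_k\alpha(L+1)}$, and $s_k\ge s/2$ for large $k$. This gives $\limsup_k\widetilde U_k(\alpha)\le u_s(\alpha)$. To integrate, we need a dominating function: $\widetilde U_k(\alpha)\le e^{-s_k\alpha}\le e^{-s\alpha/2}$ for all large $k$, which is integrable. Reverse Fatou (Fatou applied to $e^{-s\alpha/2}-\widetilde U_k$) then gives $\limsup_k\E[\tau_i]/k\le\int_0^\infty u_s(\alpha)\,d\alpha$. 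We note $s_k\to s$ from~\eqref{eq:multiplicity}: $ak/((q-1)\binom k2)=2a/((q-1)(k-1))\to s/(1-s)$.

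The main obstacle is the lower bound's reliance on the random-coefficient model. I expect the identification of the fixed-matrix sampling with the model of Section~\ref{sec:model} to be the delicate point. In particular, one must check that normalizing columns changes neither the span nor the uniformity of the support and ratio, so that Lemma~\ref{lem:survival} with a $k$-dependent $P_k$ applies verbatim. If that identification proved awkward, the fallback would be a direct argument. Non-recovery is at least the probability that $\mathcal H_i$ is an unmarked tree, by Lemma~\ref{lem:components}(ii). This probability converges to the same series $v=u_s(\alpha)$, since the expected number of small cycles near $i$ vanishes.
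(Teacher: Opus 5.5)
Your proposal is correct and follows the paper's proof. You rescale columns to identify sampling from $G_{k,q}(a_k)$ with the random model with $P_k(z)=s_kz+(1-s_k)z^2$ and $\mu$ uniform on $\mathbb{F}_q^*$, take the lower bound from the $P_k\to P$ coupling at the end of the proof of Theorem~\ref{thm:main}, and rerun the Poissonized Cayley-formula upper bound of Proposition~\ref{prop:pairs} with $s_k$ in place of $s$; your uniform dominating function $e^{-s\alpha/2}$ makes explicit a detail the paper leaves implicit. One correction: the bound $\widetilde U_k(\alpha)\le\E[e^{-s_k\alpha|\mathcal{H}_i|}]$ can be used directly, since it relies only on Lemma~\ref{lem:components}(i), which holds over every field, so your caveat about unmarked cyclic components, the detour through Lemma~\ref{lem:field}, and the fallback argument are unnecessary.
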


\begin{proof}
Multiplying a column by a nonzero scalar does not change the span, so it does not change $\tau_i$. Up to such scalars, a sampled column of $G_{k,q}(a_k)$ is a column of the model of Section~\ref{sec:model} over $\mathbb F_q$ with degree distribution $P_k(z)=s_kz+(1-s_k)z^2$ and $\mu$ uniform on $\mathbb F_q^{*}$. Indeed, in that model a weight-one column is a multiple of some $\bfe_u$, and a weight-two column on a uniformly random pair $u<v$ equals $a\bfe_u+b\bfe_v=a(\bfe_u+(b/a)\bfe_v)$, where $b/a$ is uniform on
$\mathbb F_q^{*}$.

Hence, it suffices to show that Proposition~\ref{prop:pairs} still holds when $s$ is replaced by $s_k\to s$. For the lower bound, the proof of Theorem~\ref{thm:main} shows that Lemma~\ref{lem:survival} holds whenever $P_k\to P$. For the upper bound, the proof of Proposition~\ref{prop:pairs} uses $s$ only through the probabilities that a vertex is marked and that a pair is joined by an edge, and the same limits are obtained with $s_k$ in place of
$s$.
\end{proof}

Taking $s=0.1554737$ gives the value $0.793933445\ldots$ over every fixed field. This realizes the known asymptotic constant with an explicit code; increasing $q$ improves recovery on individual cyclic components but does not change this limit. The length of $G_{k,q}(a_k)$ satisfies \[
a_kk+(q-1)\binom{k}{2}\sim\frac{(q-1)k^2}{2(1-s)},
\]
so the rate tends to zero. One may instead fix a block dimension $\ell$ and take a direct sum of $t$ copies of $G_{\ell,q}(a_\ell)$. Sampling enters the requested block with probability $1/t$, and columns of other blocks do not help to recover $x_i$.
Therefore, for $i\in[\ell]$,
\[
\frac{\E[\tau_i(\bigoplus_{j=1}^tG_{\ell,q}(a_\ell))]}{t\ell}
=\frac{\E[\tau_i(G_{\ell,q}(a_\ell))]}{\ell}.
\]
This gives a positive rate for fixed $\ell$, and a normalized expectation arbitrarily close to the right side of~\eqref{eq:pairs} by choosing $\ell$ sufficiently large. The block construction is not an instance of the model of Section~\ref{sec:model}, since its weight-two columns only join coordinates in
the same block.

\section{Conclusion}
We proved that $\liminf_{k\to\infty}\E[\tau_i]/k\geq\pi/4$ for symmetric linear ensembles with uniformly bounded column weights. This bound holds under optimal decoding.

For degree distributions with support size in ${1,2}$, we determined the limiting normalized expected recovery time. The limit is the same under optimal and peeling decoding, and its minimum over these distributions is approximately $0.793933445$. We gave explicit code families that attain this limit over every fixed finite field. We also showed that positive-rate block constructions can achieve normalized expected recovery times arbitrarily close to this value.

We have also obtained several results for columns with growing support over infinite fields, which we plan to include in an extended version.


\section*{AI Statement}
OpenAI ChatGPT and Anthropic Claude were used as writing assistants for editing
and rephrasing the text, including the presentation of some proofs, and for
checking related literature. All results and proofs are due to the authors,
who directed and verified all changes

\newpage
\appendices

\section{Proofs for Section~\ref{sec:rank}}\label{app:rank}


Let $\alpha\geq0$ and $\beta>0$, and let $A_k$ and $B_k$ be as in the proof
of Lemma~\ref{lem:ranklimit}. For a matrix with $N$ columns, column weights
$w_1,\ldots,w_N$, and row weights $d_1,\ldots,d_k$, write 
$$a=(N,w_1,\ldots,w_N,d_1,\ldots,d_k)$$ 
for its degree data and $T=w_1+\cdots+w_N$ for its number of nonzero entries.

\begin{lemma}\label{lem:compare}
For every $\delta>0$ there are constants $C_\delta$ and $k_\delta$ such that,
for all $k\geq k_\delta$ and every set $\mathcal{B}$ of matrices,
\[
 \Prob\{A_k\in\mathcal{B}\}\leq C_\delta\,\Prob\{B_k\in\mathcal{B}\}+\delta.
\]
\end{lemma}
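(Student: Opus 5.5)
Since $A_k$ and $B_k$ have the same conditional distribution given their degree data (as shown in the proof of Lemma~\ref{lem:ranklimit}), it suffices to compare the laws of the degree data. Write $\nu_A$ and $\nu_B$ for these laws. If we find a set $\mathcal{D}_\delta$ of degree data with $\nu_A(\mathcal{D}_\delta^c)\le\delta$ and $\nu_A(a)\le C_\delta\,\nu_B(a)$ for all $a\in\mathcal{D}_\delta$, then conditioning on the degree data gives
\[
 \Prob\{A_k\in\mathcal{B}\}\le\delta+\sum_{a\in\mathcal{D}_\delta}\nu_A(a)\,\Prob\{A_k\in\mathcal{B}\mid a\}\le\delta+C_\delta\,\Prob\{B_k\in\mathcal{B}\}.
\]
So the plan is to compute both densities and compare them on a typical set.

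For $B_k$, before rejection, the data is $N\sim\Pois((\alpha+\beta)k)$, the $w_j$ are i.i.d.\ with law~\eqref{eq:mixweights}, and the $d_i$ are i.i.d.\ $\Pois(\lambda)$ with $\lambda=\alpha P'(1)+\beta$. This is conditioned on the event $\mathcal{E}$ that $\sum d_i=T$ and that a pattern exists. Hence $\nu_B(a)=\Prob_0(a)/\Prob_0(\mathcal{E})\ge\Prob_0(a)$ on $\mathcal{E}$.

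For $A_k$, the pair $(N,w)$ has exactly the same law as under $\Prob_0$. Given $(N,w)$, the row weights are the row counts of independent uniform column supports. Each support has weight at most $W$ and choices are uniform, so this law is $Z(w)/\prod_j\binom{k}{w_j}$, where $Z(w)$ counts patterns with the given column and row weights. The ratio to compare is therefore
\[
 \frac{Z(w)\,/\prod_j\binom{k}{w_j}}{\Prob\{\text{i.i.d. }\Pois(\lambda)\text{ rows}=d\}}.
\]

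The main obstacle is bounding this ratio uniformly on a typical set. My first attempt is a configuration-model/multinomial estimate. Rows are filled by $T$ half-edges, and the row-count vector for independent supports is a multinomial with $T$ trials and uniform cells, conditioned to have no repeated row within a column. The plain multinomial gives $T!/(k^T\prod d_i!)$. The probability of a repetition-free assignment given $d$ is at most one, while unconditionally it is bounded below by a constant $c_W>0$ because the degrees are bounded. So the $A_k$ law of $d$ given $w$ is at most $c_W^{-1}\,T!/(k^T\prod_i d_i!)$.

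Compare this with the Poisson product $e^{-\lambda k}\lambda^{T}/\prod_i d_i!$. The ratio is $c_W^{-1}\,T!\,e^{\lambda k}/(k\lambda)^T$. By Stirling it is $O(\sqrt{T})\exp\bigl(T\log(T/(\lambda k))-T+\lambda k\bigr)$, which is bounded by a constant times $\sqrt{k}$ when $|T-\lambda k|=O(\sqrt k)$. The stray $\sqrt k$ factor is exactly cancelled because we compare against $\Prob_0$ of the same data, not merely of $d$. The event $\sum d_i=T$ has $\Prob_0$-probability of order $k^{-1/2}$ when $\E T=\E\sum d_i$. Keeping it inside $\nu_B$ through the factor $1/\Prob_0(\mathcal{E})$ gains this $\sqrt k$. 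To make this precise, I would use the local CLT for $\sum d_i\sim\Pois(\lambda k)$ at $T$: one gets $\Prob_0(\sum d_i=T\mid N,w)\asymp k^{-1/2}$ for $|T-\lambda k|\le K\sqrt k$, and $\Prob_0(\mathcal{E})\ge\Prob_0(\sum d_i=T)\cdot(1-o(1))$, since realizability holds w.h.p.\ given max row degree $o(k)$ and bounded column weights, by a Gale--Ryser check.

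Finally, take $\mathcal{D}_\delta$ to be the data with $|T-\lambda k|\le K_\delta\sqrt k$ and $\max_i d_i\le\log k$. Under $A_k$ this set has probability at least $1-\delta$ by Chebyshev (since $\operatorname{Var}T=O(k)$) and a union bound on the binomial row counts. On it the ratios above give $\nu_A(a)\le C_\delta\nu_B(a)$ with $C_\delta$ depending on $K_\delta$, $W$, $\alpha$ and $\beta$. Here $\beta>0$ ensures $\lambda>0$ and that $(\alpha+\beta)$ is bounded away from zero.
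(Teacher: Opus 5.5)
Your proposal is correct and follows essentially the same route as the paper. Both arguments reduce to the degree data and bound the $A_k$ row-count law by the with-replacement multinomial divided by the no-repetition probability. They then rewrite $T!/(k^T\prod_i d_i!)$ as $\prod_i e^{-\lambda}\lambda^{d_i}/d_i!$ divided by $\Prob\{\Pois(\lambda k)=T\}$, and absorb the resulting $\sqrt{k}$ into the normalizing constant of $B_k$ on the typical set $|T-\lambda k|\le K\sqrt{k}$.

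Only the upper bound $\Prob_0(\mathcal{E})\le\Prob_0\{\sum_i d_i=\sum_j w_j\}\le\max_{T'}\Prob\{\Pois(\lambda k)=T'\}=O(k^{-1/2})$ is actually used. Your lower bound on $\Prob_0(\mathcal{E})$ via Gale--Ryser and the truncation $\max_i d_i\le\log k$ are therefore superfluous. Also, your constant $c_W$ depends on $N/k$ as well as $W$; this is harmless, since $N\le T=O(k)$ on the typical set.
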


\begin{proof}
Since $A_k$ and $B_k$ have the same distribution given their degree data, it
suffices to compare the probabilities $p_A(a)$ and $p_B(a)$ of the degree
data. Let $\lambda=\alpha P'(1)+\beta$ and
$\pi_k(T)=\Prob\{\Pois(\lambda k)=T\}$. Let $q(a)$ be the probability of $a$
when all numbers in the definition of $B_k$ are drawn once, without
redrawing, and let $Z_k$ be the probability that they are not redrawn. Then
$p_B(a)=q(a)/Z_k$ whenever $p_B(a)>0$. Since a sum of $k$ independent
$\Pois(\lambda)$ random variables is a $\Pois(\lambda k)$ random variable,
\begin{align*}
 Z_k & \leq\Prob\{d_1+\cdots+d_k=w_1+\cdots+w_N\}\\
 &=\sum_{T}\Prob\{w_1+\cdots+w_N=T\}\,\pi_k(T)\\
 &\leq\max_T\pi_k(T).
\end{align*}
To compute $p_A(a)$, suppose first that each column draws its $w_j$ positions
independently and uniformly with replacement. Then, given $N$ and the column
weights, the row counts are multinomial, and
\[
 \frac{T!}{d_1!\cdots d_k!}\,k^{-T}
 =\frac{1}{\pi_k(T)}\prod_{i=1}^{k}e^{-\lambda}\frac{\lambda^{d_i}}{d_i!}.
\]
The support of $A_k$ is obtained from such draws by conditioning on the event
that no column draws a position twice. Given $N$ and the column weights, this
event has probability at least $\prod_{j}\bigl(1-w_j^2/(2k)\bigr)\geq e^{-W^2N/k}$
for $k\geq W^2$. Therefore, for every $a$ with $p_A(a)>0$,
\begin{align*}
 p_A(a)& \leq e^{W^2N/k}\,\frac{q(a)}{\pi_k(T)}\\
& =e^{W^2N/k}\,\frac{Z_k}{\pi_k(T)}\,p_B(a)\\
& \leq e^{W^2N/k}\,\frac{\max_{T'}\pi_k(T')}{\pi_k(T)}\,p_B(a).
\end{align*}
By Stirling's formula, there is a constant $C_t'$ such that
$\max_{T'}\pi_k(T')\leq C_t'\,\pi_k(T)$ whenever $|T-\lambda k|\leq t\sqrt{k}$
and $k$ is large enough. Hence, on the set of degree data with
$N\leq2(\alpha+\beta)k$ and $|T-\lambda k|\leq t\sqrt{k}$, we have
$p_A(a)\leq C_t\,p_B(a)$ with $C_t=C_t'e^{2(\alpha+\beta)W^2}$, and
therefore
\begin{align*}
 \Prob\{A_k\in\mathcal{B}\}\leq C_t\,\Prob\{B_k\in\mathcal{B}\}
 &+\Prob\{N>2(\alpha+\beta)k\}\\ & \ \ \ +\Prob\bigl\{|T-\lambda k|>t\sqrt{k}\bigr\},
\end{align*}
where the last two probabilities refer to $A_k$. In $A_k$, the number of
columns $N$ is $\Pois((\alpha+\beta)k)$, and $T$ is the sum of $N$
independent column weights, so $T$ has mean $\lambda k$ and variance
$(\alpha+\beta)k\,\E[w^2]$. Chebyshev's inequality bounds the last two
probabilities by $1/((\alpha+\beta)k)$ and $(\alpha+\beta)\E[w^2]/t^2$,
respectively.
Choosing $t$ with $(\alpha+\beta)\E[w^2]/t^2\leq\delta/2$, and $k$ so large
that $1/((\alpha+\beta)k)\leq\delta/2$, proves the lemma with $C_\delta=C_t$.
\end{proof}

\begin{corollary}\label{cor:compare}
For all $\alpha\geq0$ and $\beta>0$, $\rank A_k/k\to R_P(\alpha,\beta)$ in
probability as $k\to\infty$.
\end{corollary}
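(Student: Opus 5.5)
The plan is to transfer the convergence in probability from $B_k$ (Theorem~\ref{thm:coja}) to $A_k$ using the one-sided domination of Lemma~\ref{lem:compare}. Fix $\varepsilon>0$ and take the event
\[
\mathcal{B}_\varepsilon=\Bigl\{\text{matrices }X:\ \Bigl|\frac{\rank X}{k}-R_P(\alpha,\beta)\Bigr|>\varepsilon\Bigr\}.
\]
This is a set of matrices, so Lemma~\ref{lem:compare} applies to it directly.

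Given $\delta>0$, Lemma~\ref{lem:compare} supplies constants $C_\delta$ and $k_\delta$ such that, for $k\geq k_\delta$,
\[
\Prob\{A_k\in\mathcal{B}_\varepsilon\}\leq C_\delta\,\Prob\{B_k\in\mathcal{B}_\varepsilon\}+\delta.
\]
By Theorem~\ref{thm:coja}, $\Prob\{B_k\in\mathcal{B}_\varepsilon\}\to0$ as $k\to\infty$. Since $C_\delta$ does not depend on $k$, this gives
\[
\limsup_{k\to\infty}\Prob\{A_k\in\mathcal{B}_\varepsilon\}\leq\delta.
\]
Because $\delta>0$ is arbitrary, the limsup is zero. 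As this holds for every $\varepsilon>0$, we conclude that $\rank A_k/k\to R_P(\alpha,\beta)$ in probability.

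The only point needing care is the order of quantifiers. The constant $C_\delta$ must be chosen before letting $k\to\infty$, so that $C_\delta\Prob\{B_k\in\mathcal{B}_\varepsilon\}$ still vanishes in the limit. Lemma~\ref{lem:compare} states its bound uniformly over all sets $\mathcal{B}$ for each fixed $\delta$, which is exactly what is needed here. The applicability of Theorem~\ref{thm:coja} to $B_k$ is already settled, since it requires $\beta>0$, which we assume. All the substantive work was done in Lemma~\ref{lem:compare}, so the main obstacle is simply to check that the events involved are measurable functions of the matrix, which they are because the rank is a function of the matrix.
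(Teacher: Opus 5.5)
Your proposal is correct and follows the paper's proof essentially step for step. Both arguments fix $\varepsilon>0$ and apply Lemma~\ref{lem:compare} to the set of matrices whose normalized rank deviates from $R_P(\alpha,\beta)$ by more than $\varepsilon$. Both then use Theorem~\ref{thm:coja} and the fact that $C_\delta$ does not depend on $k$ to get $\limsup_{k}\Prob\{A_k\in\mathcal{B}\}\leq\delta$, and conclude by letting $\delta$ and then $\varepsilon$ be arbitrary.
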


\begin{proof}
Fix $\varepsilon>0$, and let $\mathcal{B}$ be the set of matrices $M$ with
\[
 \Bigl|\frac{\rank M}{k}-R_P(\alpha,\beta)\Bigr|>\varepsilon.
\]
By Theorem~\ref{thm:coja}, $\Prob\{B_k\in\mathcal{B}\}\to0$. Let $\delta>0$.
By Lemma~\ref{lem:compare}, for all $k\geq k_\delta$,
\[
 \Prob\{A_k\in\mathcal{B}\}\leq C_\delta\,\Prob\{B_k\in\mathcal{B}\}+\delta,
\]
and since $C_\delta$ does not depend on $k$,
$\limsup_{k\to\infty}\Prob\{A_k\in\mathcal{B}\}\leq\delta$. As $\delta>0$ is
arbitrary, $\Prob\{A_k\in\mathcal{B}\}\to0$, and as $\varepsilon>0$ is
arbitrary, $\rank A_k/k\to R_P(\alpha,\beta)$ in probability.
\end{proof}

\begin{proof}[Proof of Lemma~\ref{lem:survival}]

Let $\beta>0$. By Lemma~\ref{lem:finite}, for every $k$,
\[
U_k(\alpha)\geq\frac{R_k(\alpha,\beta)-R_k(\alpha,0)}{1-e^{-\beta}}.
\]
By Lemma~\ref{lem:ranklimit}, the right side converges to
$\bigl(R_P(\alpha,\beta)-R_P(\alpha,0)\bigr)/(1-e^{-\beta})$ as $k\to\infty$.
Hence, by~\eqref{eq:rankvalue},
\begin{align*}
 \liminf_{k\to\infty}U_k(\alpha)
 &\geq\frac{R_P(\alpha,\beta)-R_P(\alpha,0)}{1-e^{-\beta}}\\
 &=\frac{\max_{z\in[0,1]}\Phi_\alpha(z)-\max_{z\in[0,1]}\Phi_{\alpha,\beta}(z)}{1-e^{-\beta}}.
\end{align*}
By~\eqref{eq:Phi},
$\Phi_{\alpha,\beta}(z)=\Phi_\alpha(z)-(1-e^{-\beta})e^{-\alpha P'(z)}$.
Let $z_\beta$ be a maximizer of $\Phi_{\alpha,\beta}$ on $[0,1]$. Then
\begin{align*}
 \max_{z\in[0,1]}\Phi_{\alpha,\beta}(z)
 &=\Phi_\alpha(z_\beta)-(1-e^{-\beta})e^{-\alpha P'(z_\beta)}
\\&\leq\max_{z\in[0,1]}\Phi_\alpha(z)-(1-e^{-\beta})e^{-\alpha P'(z_\beta)}.
\end{align*}
Consequently, for every $\beta>0$ and every choice of
\[
z_\beta\in\argmax_{z\in[0,1]}\Phi_{\alpha,\beta}(z),
\]
we have
\[
\liminf_{k\to\infty}U_k(\alpha)
\geq e^{-\alpha P'(z_\beta)}.
\]

Here $z_\beta$ maximizes the perturbed function $\Phi_{\alpha,\beta}$, whereas the desired bound in the lemma statement is defined using the maximizers of $\Phi_\alpha=\Phi_{\alpha,0}$. It suffices to establish the same inequality for at least one point $z^*\in\cA_P(\alpha)$, since then
\begin{align*}
\liminf_{k\to\infty}U_k(\alpha)
&\geq e^{-\alpha P'(z^*)}\\
&\geq\min_{z\in\cA_P(\alpha)}e^{-\alpha P'(z)}
=\cU_P(\alpha).
\end{align*}

We obtain such a point by letting the perturbation vanish. Set $\beta_h=1/h$, and for each $h$, choose one maximizer $z_h$ of $\Phi_{\alpha,\beta_h}$. These points need not converge, but compactness of $[0,1]$ guarantees a subsequence $z_{h_i}\to z^*$. We next show that this limiting point maximizes $\Phi_\alpha$. Continuity of $z\mapsto e^{-\alpha P'(z)}$ will then allow us to pass the preceding inequality to the limit.

Since $0\leq\Phi_\alpha(z)-\Phi_{\alpha,\beta_h}(z)\leq1-e^{-\beta_h}$
for all $z\in[0,1]$, the maximal value of the two functions differ by at most
$1-e^{-\beta_h}$, and
\begin{align*}
\Phi_\alpha(z_h)\geq\Phi_{\alpha,\beta_h}(z_h)&=\max_{z\in[0,1]}\Phi_{\alpha,\beta_h}(z)\\
& \geq\max_{z\in[0,1]}\Phi_\alpha(z)-(1-e^{-\beta_h}).
\end{align*}
Letting $i\to\infty$ along $h=h_i$ and using the continuity of $\Phi_\alpha$ gives
$\Phi_\alpha(z^*)\geq\max_{z\in[0,1]}\Phi_\alpha(z)$, so $z^*\in\cA_P(\alpha)$.
Finally, since $\liminf_{k\to\infty}U_k(\alpha)\geq e^{-\alpha P'(z_h)}$ for
every $h$, and ${z\mapsto e^{-\alpha P'(z)}}$ is continuous,
\begin{align*}
\liminf_{k\to\infty}U_k(\alpha)&\geq \lim_{i\to\infty}e^{-\alpha P'(z_{h_i})}\\
& =e^{-\alpha P'(z^*)}\\ 
&\geq\min_{z\in\cA_P(\alpha)}e^{-\alpha P'(z)}\\ 
&=\cU_P(\alpha),
\end{align*}
which proves~\eqref{eq:survival}.
\end{proof}

\section{Proofs for Section~\ref{sec:integral}}\label{app:integral}
Throughout this subsection, $p_1<1$. 
since $\sum_{w=1}^W p_w=1$, there exists $w'\ge2$ such that $p_{w'}>0$ and hence 
\[
P''(z) = \sum_{w=2}^W w(w-1)p_w z^{w-2}\ge w'(w'-1)p_{w'}z^{w'-2}>0
\]
for $0<z\leq1$. 

\begin{proof}[Proof of Lemma~\ref{lem:maximizers}]
 Since
$P''(z)>0$ for $0<z\leq1$, the derivative $P'$ is strictly increasing on $[0,1]$, so $z\mapsto e^{-\alpha P'(z)}$ is strictly decreasing for $\alpha>0$. Hence, the minimum in the definition of $\cU_P(\alpha)$ is attained at the largest element of $\cA_P(\alpha)$, that is, $\cU_P(\alpha)=e^{-\alpha P'(\zeta(\alpha))}$.

Let $\Phi_\alpha'$ denote the derivative of $\Phi_\alpha$ with respect to $z$, then 
\begin{equation}\label{eq:stationarity}
 \Phi_\alpha'(z)=\alpha P''(z)\bigl(1-z-e^{-\alpha P'(z)}\bigr).
\end{equation}

 Next, we show that an interior maximizer satisfies ${\alpha P'(z)=-\log(1-z)}$. Indeed, for any $\alpha>0$,  let $z^*\in(0,1)$ be an interior maximizer of $\Phi_\alpha$.
As $\Phi_\alpha$ is differentiable, $\Phi_\alpha'(z^*)=0$. Since
$\alpha P''(z^*)>0$, \eqref{eq:stationarity} gives $1-z^*=e^{-\alpha P'(z^*)}$, and
taking logarithms, which is possible since $1-z^*>0$,
\[
 \alpha P'(z^*)=-\log(1-z^*).
\]

The endpoint $z=1$ does not maximize $\Phi_\alpha$. To see this, note that by \eqref{eq:stationarity},
$\Phi_\alpha'(1)=-\alpha P''(1)e^{-\alpha P'(1)}<0$. Since $\Phi_\alpha'$ is continuous, there is $\delta>0$ such that $\Phi_\alpha'<0$ on~${(1-\delta,1]}$. For~$z\in(1-\delta,1)$, the mean value theorem gives some
${\xi\in(z,1)}$ with
\[
 \Phi_\alpha(1)-\Phi_\alpha(z)=\Phi_\alpha'(\xi)(1-z)<0,
\]
since $\Phi_\alpha'(\xi)<0$ and $1-z>0$. Hence, $\Phi_\alpha(z)>\Phi_\alpha(1)$,
that is, $\Phi_\alpha$ is strictly decreasing on $(1-\delta,1]$.

Next, suppose that the endpoint $z=0$ maximizes $\Phi_\alpha$. We show that
then $P'(0)=p_1=0$. Otherwise, 
$$\lim_{z\downarrow 0} 1-z-e^{-\alpha P'(z)}=1-e^{-\alpha p_1}>0$$
so this factor is positive on $[0,\delta)$ for some
$\delta>0$. Since also $P''(z)>0$ for $z>0$, \eqref{eq:stationarity} gives
$\Phi_\alpha'>0$ on $(0,\delta)$. Take $z\in(0,\delta)$, by the mean value theorem there exists 
${\xi\in(0,z)}$ with
\[
 \Phi_\alpha(z)-\Phi_\alpha(0)=z\Phi_\alpha'(\xi)>0,
\]
and thus $\Phi_\alpha(z)>\Phi_\alpha(0)$ for $z\in(0,\delta)$, a contradiction. Hence,
$P'(0)=0$, and so $1-z=e^{-\alpha P'(z)}$ holds also at $z=0$.

Consequently, every maximizer $z^*$ of $\Phi_\alpha$ satisfies
$$1-z^*=e^{-\alpha P'(z^*)}.$$ Applying this to $z=\zeta(\alpha)$ and using
$\cU_P(\alpha)=e^{-\alpha P'(\zeta(\alpha))}$, we obtain
\[
 \cU_P(\alpha)=1-\zeta(\alpha).
\]
\end{proof}

\begin{proof}[Proof of Lemma~\ref{lem:zeta}]
We first show that $\zeta$ is nondecreasing. To this end, we compare the values of $\Phi_\alpha$ at two points $0\leq z_1<z_2\leq1$:
\begin{equation}\label{eq:crossing}
 \frac{\Phi_\alpha(z_2)-\Phi_\alpha(z_1)}{\alpha}
 =H(z_1)-H(z_2)-\int_{P'(z_1)}^{P'(z_2)}e^{-\alpha u}\,du.
\end{equation}
The right side is strictly increasing in $\alpha$. Hence, if ${\Phi_\alpha(z_2)\geq\Phi_\alpha(z_1)}$ for some $\alpha>0$, then $\Phi_{\alpha'}(z_2)>\Phi_{\alpha'}(z_1)$ for all $\alpha'>\alpha$.

We now use this to show that $\zeta$ is nondecreasing. Assume, to the contrary, that $\zeta(\alpha')<\zeta(\alpha)$ for some $0<\alpha<\alpha'$, and put $z_1=\zeta(\alpha')$ and $z_2=\zeta(\alpha)$. Since $z_2$ maximizes $\Phi_\alpha$, we have $\Phi_\alpha(z_2)\geq\Phi_\alpha(z_1)$, and therefore $\Phi_{\alpha'}(z_2)>\Phi_{\alpha'}(z_1)$. This contradicts the fact that $z_1$ maximizes $\Phi_{\alpha'}$. Therefore,~$\zeta$ is nondecreasing.

Next, we show that $\zeta$ is right-continuous. Let $\alpha_0>0$. Since~$\zeta$ is nondecreasing, the limit
$z^+=\lim_{\alpha\downarrow\alpha_0}\zeta(\alpha)$ exists and satisfies $z^+\geq\zeta(\alpha_0)$. It remains to show that $z^+\leq\zeta(\alpha_0)$. To this end, we show that $z^+$ maximizes $\Phi_{\alpha_0}$. 
Fix $y\in[0,1]$. For every $\alpha>\alpha_0$, since $\zeta(\alpha)$ maximizes
$\Phi_\alpha$,
\[
\Phi_\alpha(\zeta(\alpha))\geq\Phi_\alpha(y).
\]
Let $\alpha\downarrow\alpha_0$ on both sides. On the right side, the
point $y$ is fixed, and $\Phi_\alpha(y)=e^{-\alpha P'(y)}-\alpha H(y)$ is a continuous function of $\alpha$, so $\Phi_\alpha(y)\to\Phi_{\alpha_0}(y)$. On the left side, both the parameter $\alpha$ and the point $\zeta(\alpha)$ change, with $(\alpha,\zeta(\alpha))\to(\alpha_0,z^+)$. Since $P'$ and $H$ are polynomials, the function $(\alpha,z)\mapsto e^{-\alpha P'(z)}-\alpha H(z)$ is continuous in both variables jointly, and therefore
$\Phi_\alpha(\zeta(\alpha))\to\Phi_{\alpha_0}(z^+)$. Since the inequality holds for every $\alpha>\alpha_0$, it is preserved in the limit, and $\Phi_{\alpha_0}(z^+)\geq\Phi_{\alpha_0}(y)$. Since $y\in[0,1]$ is arbitrary, $z^+$ maximizes $\Phi_{\alpha_0}$. Every maximizer of $\Phi_{\alpha_0}$ is at most its largest maximizer $\zeta(\alpha_0)$, so $z^+\leq\zeta(\alpha_0)$. Together with $z^+\geq\zeta(\alpha_0)$, this gives $\lim_{\alpha\downarrow\alpha_0}\zeta(\alpha)=\zeta(\alpha_0)$.

We now show that $\zeta(\alpha)\to0$ as $\alpha\downarrow0$. For $x\geq0$, we have $1-x\leq e^{-x}\leq1-x+x^2/2$. Applying this with ${x=\alpha P'(z)\in[0,\alpha P'(1)]}$ gives
\[e^{-\alpha P'(z)}=1-\alpha P'(z)+r_\alpha(z)\]
with
\[0\leq r_\alpha(z)\leq \frac{\alpha^2(P'(z))^2}{2}\le \frac{\alpha^2(P'(1))^2}{2}\] 
for all $z\in[0,1]$. Since $P'(z)+H(z)=1-P(z)+zP'(z)$, it follows that
\[
 \Phi_\alpha(z)=1-\alpha g(z)+r_\alpha(z),
 \]
 with
 \[
 g(z)=1-P(z)+zP'(z).
\]
The function $g$ is strictly increasing on $[0,1]$, since $g'(z)=zP''(z)>0$ for $0<z\leq1$. Now fix $\varepsilon\in(0,1)$, and let $z\geq\varepsilon$. Then
\begin{align*}
 \Phi_\alpha(z)-\Phi_\alpha(0)
 &=-\alpha\bigl(g(z)-g(0)\bigr)+r_\alpha(z)-r_\alpha(0)\\
 &\leq-\alpha\bigl(g(\varepsilon)-g(0)\bigr)+\frac{\alpha^2(P'(1))^2}{2},
\end{align*}
where the inequality uses $g(z)\geq g(\varepsilon)$, together with ${r_\alpha(z)\leq\alpha^2(P'(1))^2/2}$ and $r_\alpha(0)\geq0$. Since $g(\varepsilon)>g(0)$, the right side is negative
whenever
\[
0<\alpha<\frac{2\bigl(g(\varepsilon)-g(0)\bigr)}{(P'(1))^2}.
\]
For such $\alpha$, every $z\geq\varepsilon$ satisfies $\Phi_\alpha(z)<\Phi_\alpha(0)$, so no $z\geq\varepsilon$ maximizes $\Phi_\alpha$, that is, $\zeta(\alpha)<\varepsilon$. Since $\varepsilon$ is arbitrary, $\zeta(\alpha)\to0$ as $\alpha\downarrow0$.

Finally, we show that $\zeta(\alpha)\to1$ as $\alpha\to\infty$. For every $z\in[0,1]$,
\[
 \frac{\Phi_\alpha(z)}{\alpha}=\frac{e^{-\alpha P'(z)}}{\alpha}-H(z),
 \]
where
\[0<\frac{e^{-\alpha P'(z)}}{\alpha}\leq\frac1\alpha,
\]
since $P'(z)\geq0$. Since $H'(z)=-(1-z)P''(z)<0$ for $0<z<1$, the function $H$ is strictly decreasing on $[0,1]$, and $H(1)=1-P(1)=0$. Now fix $\varepsilon\in(0,1)$, and let $z\leq1-\varepsilon$. Then $H(z)\geq H(1-\varepsilon)$, and therefore
\begin{align*}
 \frac{\Phi_\alpha(z)}{\alpha}-\frac{\Phi_\alpha(1)}{\alpha}
 &=\frac{e^{-\alpha P'(z)}}{\alpha}-H(z)-\frac{e^{-\alpha P'(1)}}{\alpha}+H(1)\\
 &<\frac1\alpha-H(1-\varepsilon),
\end{align*}
where the inequality uses $e^{-\alpha P'(1)}/\alpha>0$, and $H(1)=0$. Since $H(1-\varepsilon)>H(1)=0$, the right side is at most zero whenever $\alpha\geq1/H(1-\varepsilon)$. For such $\alpha$, every $z\leq1-\varepsilon$ satisfies $\Phi_\alpha(z)<\Phi_\alpha(1)$, so no $z\leq1-\varepsilon$ maximizes $\Phi_\alpha$, that is, $\zeta(\alpha)>1-\varepsilon$. Since $\varepsilon$ is arbitrary, $\zeta(\alpha)\to1$ as $\alpha\to\infty$.
\end{proof}

\begin{proof}[Proof of Lemma~\ref{lem:normalization}]
To prove~\eqref{eq:normalization}, we split the levels $z\in(0,1)$ into two
kinds, according to how $\zeta$ reaches them.
Since $\zeta$ is nondecreasing, it has at most countably many discontinuity points. At a discontinuity point $\alpha_0$, let
$z^-=\lim_{\alpha\uparrow\alpha_0}\zeta(\alpha)$, which exists since $\zeta$ is nondecreasing, and $z^+=\zeta(\alpha_0)$, so that $z^-<z^+$ by right-continuity. We call $(z^-,z^+)$ a \emph{jump interval} (note that distinct jump
intervals are disjoint since $\zeta$ is nondecreasing). 
Let $\mathcal{C}$ be the set of levels $z\in(0,1)$ that $\zeta$~attains. For $z\in\mathcal{C}$, we have ${\zeta(\sigma(z))=z}$. Indeed, if $z=\zeta(\alpha_1)$ for some $\alpha_1>0$, then $\sigma(z)\leq\alpha_1$ by the definition of $\sigma(z)$, so $\zeta(\sigma(z))\leq\zeta(\alpha_1)=z$ since $\zeta$ is nondecreasing, while $\zeta(\sigma(z))\geq z$ by right-continuity.
Every  $z\in(0,1)\setminus\mathcal{C}$ lies in a jump interval or is the left endpoint of one. To see this, note that $\zeta(\sigma(z))\geq z$ as shown above, and $\zeta(\sigma(z))\neq z$ since $z$ is not attained, so $\zeta(\sigma(z))>z$. On the other hand, $\zeta(\alpha)<z$ for all $\alpha<\sigma(z)$, so $\lim_{\alpha\uparrow\sigma(z)}\zeta(\alpha)\leq z$. Hence, $\sigma(z)$ is a discontinuity point, and $z$ lies in $[z^-,z^+)$ for its jump interval $(z^-,z^+)$. 

Since there are at most countably many discontinuity points, the set of endpoints of jump intervals is countable, so it has measure zero and does not affect the integrals over $z$ below. Moreover, the levels in a jump interval are not attained by $\zeta$, so $\mathcal{C}$ is disjoint from all jump intervals. Hence, up to a set of measure zero, $(0,1)$ is the disjoint union of $\mathcal{C}$ and the jump intervals.

Let $z\in\mathcal{C}$. Since $\zeta(\alpha)$ maximizes $\Phi_\alpha$ for every $\alpha>0$, the level $z=\zeta(\sigma(z))$ maximizes $\Phi_{\sigma(z)}$, and since $0<z<1$, it is an interior maximizer.
Applying the condition $\alpha P'(z)=-\log(1-z)$ for interior maximizers with $\alpha=\sigma(z)>0$ gives
\begin{equation}\label{eq:regular}
 \sigma(z)P'(z)=-\log(1-z).
\end{equation}

Now consider a jump interval $(z^-,z^+)$ at a discontinuity point $\alpha_0$. We first show that $\sigma(z)=\alpha_0$ for every $z\in(z^-,z^+)$. For $\alpha<\alpha_0$, monotonicity gives $\zeta(\alpha)\leq z^-<z$, while $\zeta(\alpha_0)=z^+>z$. Hence, $\alpha_0$ is the smallest $\alpha$ with $\zeta(\alpha)\geq z$, that is, $\sigma(z)=\alpha_0$.

Next, we show that both endpoints maximize $\Phi_{\alpha_0}$. By definition,  $z^+=\zeta(\alpha_0)$  maximizes $\Phi_{\alpha_0}$. For $z^-$, fix $y\in[0,1]$. For every $\alpha<\alpha_0$, we
have $\Phi_\alpha(\zeta(\alpha))\geq\Phi_\alpha(y)$ since  $\zeta(\alpha)$ maximizes $\Phi_\alpha$. Letting $\alpha\uparrow\alpha_0$ and using the joint continuity of $\Phi_\alpha(z)$ in $(\alpha,z)$, as in the proof of Lemma~\ref{lem:zeta}, gives
$\Phi_{\alpha_0}(z^-)\geq\Phi_{\alpha_0}(y)$. Since every maximizer $z$ of $\Phi_{\alpha_0}$ satisfies $z<1$ and $1-z=e^{-\alpha_0P'(z)}$, both endpoints
satisfy
\begin{equation}\label{eq:endpoints}
 \alpha_0P'(z)=-\log(1-z),\qquad z\in\{z^-,z^+\}.
\end{equation}

Finally, we compare the values of $\Phi_{\alpha_0}$ at the two endpoints. For $z\in[0,1)$, let
\[
 \Psi(z)=z+(1-z)\log(1-z)-\alpha_0P(z).
\]
At each $z\in\{z^-,z^+\}$, substituting $e^{-\alpha_0P'(z)}=1-z$ and $\alpha_0P'(z)=-\log(1-z)$ into the definition of $\Phi_{\alpha_0}$ gives
\begin{align*}
 \Phi_{\alpha_0}(z)
 &=e^{-\alpha_0P'(z)}-\alpha_0\bigl(1-P(z)-(1-z)P'(z)\bigr)\\
 &=(1-z)-\alpha_0+\alpha_0P(z)-(1-z)\log(1-z)\\
 &=1-\alpha_0-\Psi(z).
\end{align*}
Since both endpoints maximize $\Phi_{\alpha_0}$, we have
${\Phi_{\alpha_0}(z^-)=\Phi_{\alpha_0}(z^+)}$, and therefore
$\Psi(z^-)=\Psi(z^+)$. Moreover,
\[
 \Psi'(z) =-\log(1-z)-\alpha_0P'(z),
\]
and $\Psi$ is continuously differentiable on $[z^-,z^+]\subset[0,1)$. Hence, by the fundamental theorem of calculus,
\[
0=\Psi(z^+)-\Psi(z^-)=\int_{z^-}^{z^+}\Psi'(z)\,dz,
\]
that is,
\begin{equation}\label{eq:jump}
 \int_{z^-}^{z^+}-\log(1-z)\,dz=\alpha_0\int_{z^-}^{z^+}P'(z)\,dz.
\end{equation}

We now combine~\eqref{eq:regular} and~\eqref{eq:jump}. For $z\in\mathcal{C}$, \eqref{eq:regular} gives $-\log(1-z)/\sigma(z)=P'(z)$. On a jump interval $(z^-,z^+)$ at $\alpha_0$, we have $\sigma(z)=\alpha_0$, so dividing \eqref{eq:jump} by $\alpha_0$ gives $\int_{z^-}^{z^+}-\log(1-z)/\sigma(z)\,dz=\int_{z^-}^{z^+}P'(z)\,dz$.
Adding the contributions of $\mathcal{C}$ and of all jump intervals, which is justified since the integrands are nonnegative, and using $P(0)=0$ and $P(1)=1$, we obtain
\[
\int_0^1\frac{-\log(1-z)}{\sigma(z)}\,dz=\int_0^1P'(z)\,dz=P(1)-P(0)=1.
\]
\end{proof}


\newpage
\begin{thebibliography}{30}

\bibitem{cover}
D. Bar-Lev, O. Sabary, R. Gabrys, and E. Yaakobi,
``Cover your bases: How to minimize the sequencing coverage in DNA storage
systems,'' \emph{IEEE Trans. Inf. Theory}, vol.~71, no.~1, pp.~192--218, 2025.

\bibitem{duality}
M. Bertuzzo, A. Ravagnani, and E. Yaakobi, ``The DNA coverage depth problem:
Duality, weight distributions, and applications,'' arXiv:2603.06489, 2026.

\bibitem{optimal}
R. Gross, Y. Grunbaum, M. Bertuzzo, E. Yaakobi, and A. Ravagnani,
``Optimal codes for the coverage depth problem and the performance of random codes,''
arXiv:2610.07370, 2026.

\bibitem{files}
H. Abraham, R. Gabrys, and E. Yaakobi,
``Covering all bases: The next inning in DNA sequencing efficiency,''
in \emph{Proc. IEEE Int. Symp. Inf. Theory (ISIT)}, 2024, pp.~464--469.

\bibitem{generalized}
A. Gruica, A. Petrillo, and F. Zullo,
``The generalized random access problem for linear codes,''
arXiv:2608.20152, 2026.

\bibitem{block}
D. Bar-Lev,
``Coded information retrieval for block-structured DNA-based data storage,''
arXiv:2603.17154, 2026.

\bibitem{hyperbolic}
C.~V.~A. Vlachos and D. Bar-Lev,
``A hyperbolic bound for file retrieval in DNA-based data storage,''
arXiv:2609.36067, 2026.

\bibitem{partial}
N. Akkaya and E. Haytao\u{g}lu,
``Dupleth: DNA depolama sistemlerinde LT kodlar{\i} kullan{\i}ld{\i}\u{g}{\i}nda k{\i}smi kapsama derinli\u{g}i problemi,''
\emph{Osmaniye Korkut Ata Üniversitesi Fen Bilimleri Enstitüsü Dergisi},
vol.~9, no.~4, pp.~2301--2321, 2026.

\bibitem{hanna}
S.~K. Hanna,
``On the reliability of information retrieval from MDS coded data in DNA storage,''
\emph{IEEE Trans. Mol. Biol. Multi-Scale Commun.}, 2026.

\bibitem{cao}
R. Cao, P. Zhou, and X. Chen,
``Modeling DNA storage retrieval reliability via sequencing coverage depth,''
\emph{Brief. Bioinform.}, vol.~27, no.~3, Art.~no.~bbag303, 2026.

\bibitem{noisy}
H. Abraham, I. Feldman, and E. Yaakobi,
``Coverage depth analysis: Random access for DNA storage over noisy channels,''
in \emph{Proc. IEEE Int. Symp. Inf. Theory (ISIT)}, 2026, pp.~1--6.

\bibitem{repair}
A. Levy, R. Con, E. Yaakobi, and H.~M. Kiah,
``Expected recovery time in DNA-based distributed storage systems,''
arXiv:2602.07601, 2026.

\bibitem{distributed}
X. Kong, O. Elishco, C. Wang, and T.~M. Duman,
``The coverage depth problem in distributed DNA data storage,''
arXiv:2610.02931, 2026.

\bibitem{windows}
Y. Grunbaum and E. Yaakobi,
``General coverage models: Structure, monotonicity, and shotgun sequencing,''
in \emph{Proc. IEEE Int. Symp. Inf. Theory (ISIT)}, 2026, pp.~1--6.

\bibitem{preuss}
I. Preuss, B. Galili, Z. Yakhini, and L. Anavy,
``Sequencing coverage analysis for combinatorial DNA-based storage systems,''
\emph{IEEE Trans. Mol. Biol. Multi-Scale Commun.},
vol.~10, no.~2, pp.~297--316, 2024.

\bibitem{composite}
T. Cohen and E. Yaakobi,
``Optimizing the decoding probability and coverage ratio of composite DNA,''
\emph{IEEE J. Sel. Areas Inf. Theory}, vol.~6, pp.~417--431, 2025.

\bibitem{motifs}
R. Sokolovskii, P. Agarwal, L.~A. Croquevielle, Z. Zhou, and T. Heinis,
``Coding over coupon collector channels for combinatorial motif-based DNA storage,''
\emph{IEEE Trans. Commun.}, vol.~73, no.~6, pp.~3750--3760, 2025.

\bibitem{combinatorial}
A. Gruica, D. Bar-Lev, A. Ravagnani, and E. Yaakobi,
``A combinatorial perspective on random access efficiency for DNA storage,''
\emph{IEEE Trans. Inf. Theory}, vol.~71, no.~12, pp.~9395--9412, 2025.

\bibitem{geometry}
A. Gruica, M. Montanucci, and F. Zullo,
``The geometry of codes for random access in DNA storage,''
\emph{Des. Codes Cryptogr.}, vol.~94, no.~5, Art.~no.~114, 2026.

\bibitem{variables}
\c{S}. Bodur, S. Lia, H.~H. L\'opez, R. Ludhani, A. Ravagnani, and L. Seccia,
``The random variables of the DNA coverage depth problem,''
\emph{IEEE Trans. Inf. Theory}, 2026, doi:~10.1109/TIT.2026.3711141.

\bibitem{wang}
C. Wang and E. Yaakobi, ``Random access in DNA storage: Algorithms,
constructions, and bounds,'' arXiv:2601.07053, 2026.


\bibitem{matroid}
D. Bar-Lev, H.~H. López, G.~L. Matthews, and F. Salizzoni,
``A matroidal framework for random access in linear codes,''
arXiv:2610.28820, 2026.

\bibitem{first}
A. Boruchovsky, O. Elishco, R. Gabrys, A. Gruica, I. Tamo, and E. Yaakobi,
``Making it to first: The random access problem in DNA storage,''
arXiv:2501.12274, 2025.

\bibitem{hby}
C. Hofmeister, R. Bitar, and E. Yaakobi,
``Random access expectation in DNA storage and fountain codes,''
arXiv:2605.10919v2, 2026.

\bibitem{rank}
A. Coja-Oghlan, A.~A. Erg\"ur, P. Gao, S. Hetterich, and M. Rolvien,
``The rank of sparse random matrices,'' \emph{Random Struct. Algorithms},
vol.~62, no.~1, pp.~68--130, 2023.
\end{thebibliography}
\end{document}